
\documentclass[letterpaper, 10 pt, conference]{ieeeconf}  

\IEEEoverridecommandlockouts                              

\overrideIEEEmargins                                      



\usepackage{amsmath} 
\usepackage{amssymb}  

\usepackage{amssymb}
\usepackage{amsmath}
\usepackage{color}
\usepackage{graphicx}
\usepackage{dsfont}
\usepackage{mathrsfs}
\usepackage{etoolbox}
\usepackage{float}
\usepackage{subfigure}
\usepackage{amsbsy}
\usepackage{mathabx}
\usepackage{comment}
\usepackage{hyperref}
\usepackage{algorithm}
\usepackage{algpseudocode}
\usepackage{caption}
\captionsetup{font=small}
\usepackage{relsize}
\usepackage{siunitx}
\usepackage{soul}
\sisetup{output-complex-root = \mathbf{i}}
\usepackage[noadjust]{cite}
\usepackage{textcomp}
\usepackage{multirow}
\usepackage[normalem]{ulem}

\definecolor{modcol}{RGB}{255,102,178}
\newtheorem{SERMprob}{\textbf{Problem}}

\definecolor{MFabg}{RGB}{76, 191, 229}

\newtheorem{thm}{Theorem}
\newtheorem{lem}{Lemma}

\newcommand{\colvec}[2][.9]{%
	\scalebox{#1}{%
		\renewcommand{\arraystretch}{.9}%
		$\begin{bmatrix}#2\end{bmatrix}$%
	}
}


\newcommand{\Rset}[1]{\mathbb{R}^{#1}} 



\def \u {\mathbf{u}}

\def \x {\mathbf{x}}

\def \A {\mathbf{A}}

\def \D {\mathbf{D}} 
 
\def \F {\mathbf{F}}

\def \I {\mathbf{I}}

\def \L {\mathbf{L}} 
\def \M {\mathbf{M}} 
\def \N {\mathbf{N}}

\def \Q {\mathbf{Q}}


\def\Emc{\mathcal{E}}

\def\Gmc{\mathcal{G}}

\def\Imc{\mathcal{I}}

\def\Nmc{\mathcal{N}}

\def\Vmc{\mathcal{V}}











%








\def \ones			{{\mathds{1}}} 

\DeclareMathOperator{\diag}{diag} 

\newcommand{\eye}[1]{\mathbf{I}_{#1}} 




\def \nonodes 	{n} 
\def \lap		 		{\mathbf{L}} 
\def \normlap		 	{\boldsymbol{\mathcal{L}}} 

\newcommand{\rate}[1]		{\mathrm{r}_{#1}} 

\title{A General Regularized Distributed Solution for\\ System State Estimation from Relative Measurements}

\author{Marco~Fabris, Giulia~Michieletto~\IEEEmembership{Member,~IEEE,} and~Angelo~Cenedese~\IEEEmembership{Senior Member,~IEEE}	
	\thanks{M. Fabris is with the Dept. of Aerospace and Control Engineering, Technion, Haifa, Israel.
		G. Michieletto is with the Dept. of Management and Engineering, A. Cenedese is with the Dept. of Information Engineering, both at University of Padova, Padova, Italy.}
	\thanks{\mbox{Corresp. author: M. Fabris, {\tt \scriptsize \href{mailto:marco.fabris@campus.technion.ac.il}{marco.fabris@campus.technion.ac.il}}}
	}
	\thanks{Part of this work was supported by MIUR (Italian Ministry for Education)
		under the initiative ``Departments of Excellence" (Law 232/2016).}
}

\begin{document}

	\maketitle
	\thispagestyle{empty}
	\pagestyle{empty}

	\begin{abstract}
	
This work presents a novel general regularized distributed solution for the state estimation problem in networked systems. 
		Resting on the graph-based representation of sensor networks and adopting a multivariate least-squares approach, the designed solution exploits the set of the available inter-sensor relative measurements and leverages a general regularization framework, whose parameter selection is shown to control the estimation procedure convergence performance. 
As confirmed by the numerical results, this new estimation scheme allows $(i)$ the extension of other approaches investigated in the literature and $(ii)$ the convergence optimization in correspondence to any (undirected) graph modeling the given sensor network. 
		
	\end{abstract}

	\begin{keywords}
		    Sensor networks,  Estimation, Network analysis and control.
	\end{keywords}

	
	\section{Introduction} 
	\label{sec:intro}
	
A \emph{Sensor Network} (SN) can be generically defined as a system composed of multiple devices (sensor nodes) endowed with limited computation, sensing  and communication capabilities, which interact in order to solve problems that are beyond the capacity and knowledge of each single element, accomplishing complex global tasks through the realization of simple local rules. Characterized by a cooperative nature, in the latest years, {SNs} have become a very popular enabling technology in several cutting-edge research areas, including, e.g., 	Internet-of-Things~\cite{guastella2020cooperative}, and Smart Sensing~\cite{lissandrini2019cooperative}.  

A canonical as well as fundamental problem within this context consists in the \emph{state estimation from relative measurements} (SERM), namely in the determination of the state of each sensor node composing the network, generally resting on the exploitation of a set of noisy relative measurements through a distributed approach~\cite{barooah2007estimation}. This issue emerges as a general problem in many diverse SN applications, as, for instance, clock synchronization in wireless SNs, self-calibration of visual SNs, power state estimation in smart grids (see, e.g.,~\cite{xiong2017cooperative,simonjan2019decentralized,dehghanpour2018survey} and the references therein).

\medskip 
\noindent{\textbf{Related works} -} 
Motivated by the pervasiveness of this task, many algorithms have been proposed over the years. 
These mainly rely on filtering strategies (particle or extended Kalman filters), on maximum a posteriori or maximum likelihood estimation criteria, on the resolution of least-squares (LS) problems basing on consensus-based agreement protocols~(see, e.g.,~\cite{trimpe2014event,farina2016distributed,aster2018parameter} and the references therein).

{In particular, this latter optimization-based method allows to investigate the {SERM} algorithm performance in connection with the topological features of the SN, especially in relation to the quality and quantity of the available measurements.} 
{In this direction, for instance, the convergence properties of a LS-based {SERM} algorithm are examined in~\cite{ravazzi2018distributed} given that the relative measurements are characterized by heterogeneous and uncertain quality. 
Similarly, in~\cite{shi2020bias}, the estimation performance is studied in the light of the amount of the available measurements, also corrupted by a constant bias; different behavior are identified for bipartite and nonbipartite tolopogies. 
This distinction emerges also in~\cite{fabris2019distributed}, where a distributed parametric iterative SERM strategy is presented, and such method is similar to the solution proposed in~\cite{fabris2020proximal}: both these algorithms involve a penalty parameter that acts as a regularization coefficient and allows to improve the convergence performance by prioritizing information sharing through certain network links. }  
	
\medskip
	\noindent{\textbf{Contributions} -} 
 {Inspired by the strategies in}~\cite{fabris2019distributed,fabris2020proximal}, this work provides a comprehensive point on the SERM problem and, then, formalizes a \textit{general regularized distributed solution} (GRDS) by broadening the groundwork already established in the field.
{Indeed, both the consensus-like iterative schemes $\Sigma_{\eta}$ and $\Sigma_{\rho}$ devised in~\cite{fabris2019distributed} and~\cite{fabris2020proximal}, respectively, can be revisited as specific cases of the here proposed GRDS framework. 
In addition, this holds also for the well-known consensus-based scheme $\Sigma_{\epsilon}$ detailed in \cite{olfati2007consensus}. 
Finally, such reformulation allows the procedure performance optimization for any kind of connected undirected network.}

Specifically, the purpose of this work is 
$(i)$ to design a general framework encompassing the (linear) SERM problem and its convergence properties analysis; 
$(ii)$ to extend the parametric regularized estimation approaches investigated in the literature, leading to {improved} versatility for the parameter selection through the definition of wider domains; 
$(iii)$ to yield optimal solutions in order to compute the desired distributed estimates for any given topology. 
This turns particularly beneficial for a class of small-world networks, previously lacking of an effective optimal tuning method.

\medskip 
\noindent{\textbf{Paper structure} -} 
The statement of SERM problem for a given SN is provided in Section~\ref{sec:state_estimation}. %
The solving approach is investigated in Section~\ref{sec:GRDS} where the GRDS is introduced and its convergence properties are discussed.  Section~\ref{sec:advantagesofGRDS} highlights the advantages of the proposed GRDS in relation to a larger variety of topologies. 
Numerical results supporting the theoretical findings are illustrated in  Section~\ref{sec:simulations}. 
Finally, Section~\ref{sec:conclusions} concludes the work summarizing the main points.

	\medskip 
	\noindent{\textbf{Notation} -} In this work, 
	$\mathds{1}_n \in \mathbb{R}^n$ and $\mathbf{0}_{n} \in \mathbb{R}^n$ denote the $n$-dimensional (column) vectors having respectively all one and all zero entries, while $\mathbf{I}_n \in \mathbb{R}^{n\times n}$ identifies the identity matrix of dimension $n$. {The sets $\mathcal{I}$, $\mathcal{I}_0$ and $\mathcal{I}_1$ correspond to the indexes collections $\{ 1 \ldots n\}$, $\{ 0 \ldots n-1\}$, and $\{ 1 \ldots n-1\}$ respectively.}
	Then, for any square matrix $\mathbf{M}\in \mathbb{R}^{n \times n}$, {$\Lambda^{\mathbf{M}} = \{\lambda_i^\mathbf{M}, i \in \mathcal{I}_0 \} $ denotes its spectrum, i.e., the set  of  its  ordered eigenvalues, 
	and $\varsigma_{\M}$ is defined as $\varsigma_{\M} = (\lambda_1^{\M} + \lambda_{n-1}^{\M})/2$}. The notation $\mathbf{M} \succeq 0$ ($\mathbf{M} \succ 0$) implies that $\lambda_i^\mathbf{M}\geq 0$ ($\lambda_i^\mathbf{M} > 0$) for all {$i \in \mathcal{I}_0$}; if $\mathbf{M}$ is also symmetric, it is positive semidefinite (definite).
	Given $\mathbf{M}_{1}, \mathbf{M}_{2} \in \mathbb{R}^{n \times n}$, $\mathbf{M}_{1} \succeq \mathbf{M}_{2}$ stands for $\mathbf{M}_{1}-\mathbf{M}_{2} \succeq 0$. Finally, $\mathbf{M} = \diag\left(\mathbf{m}\right) \in \mathbb{R}^{n \times n}$ refers to the diagonal matrix generated by the vector $\mathbf{m} \in \mathbb{R}^n$.

	\section{The SERM Problem}
	\label{sec:state_estimation}

{In this section, the SERM problem is formally stated 
after the introduction of the graph-based model for any given SN. Moreover,  both the centralized and decentralized solution approaches derived in~\cite{fabris2019distributed,fabris2020proximal} are recalled.}

	\subsection{Graph-based SN model}
	\label{sec:graph_model}
	
	As from the literature,
	a SN composed of $n\geq 2$ {devices} can be modeled as a graph $\mathcal{G}=\left(\mathcal{V},\:\mathcal{E}\right)$ wherein each element in the \textit{nodes set} $\mathcal{V}=\left\{v_1 \ldots v_n\right\}$ corresponds to a {sensor node} in the group, and the \textit{edges set} $\mathcal{E}\subseteq \mathcal{V}\times \mathcal{V}$ describes the {nodes'} interplay in terms of unidirectional/bidirectional sensing and/or communication interactions. Hereafter, it is assumed that there exists the edge $e_{ij}=\{v_i , v_j\} \in \mathcal{E}$, i.e.,  nodes $v_i$ and $v_j$ are adjacent, if and only if the $i$-th and $j$-th {sensor nodes} can sense each other and are able to reciprocally exchange information according to some predetermined communication protocol. This implies that the graph $\mathcal{G}$ representing the given SN is \textit{undirected}. 
	In addition, w.l.o.g., it is also assumed to be \textit{connected}.

According to the graph-based SN model, the {nodes} interplay can be summarized through the \textit{adjacency matrix}~$\mathbf{A} \in \mathbb{R}^{n \times n}$ such that $[\mathbf{A}]_{ij}=1$, if the $i$-th and $j$-th {nodes} can interact; $[\mathbf{A}]_{ij}=0$, otherwise. 
Thus, for each node $v_i \in \mathcal{V}$, the set $\mathcal{N}_i =\left\{j\;|\; v_i, v_j\in\mathcal{V}, v_j\neq v_i, e_{ij} \in \Emc\right\} \subseteq \mathcal{V}$, namely its \textit{neighborhood}, represents the set of {nodes} interacting with the $i$-th one. 
The \textit{degree} $d_i = \vert \mathcal{N}_i \vert $ of node $v_i$, then, corresponds to the $i$-th element of the main diagonal of the \textit{degree matrix} $\mathbf{D} = \diag(\mathbf{A}\mathds{1}_n)\in \mathbb{R}^{n\times n}$. 
The latter is, in turns, involved in the definition of both the \emph{Laplacian matrix} $\mathbf{L}=\mathbf{D}-\mathbf{A} \in \mathbb{R}^{n\times n}$ and \emph{normalized Laplacian matrix} $\boldsymbol{\mathcal{L}}=\mathbf{D}^{-1/2}\mathbf{L}\mathbf{D}^{-1/2} \in \mathbb{R}^{n\times n}$.
	
Finally, $d_{m}={\min}_{i \in \mathcal{I}}\{d_i\}$ and $d_{M}={\max}_{i \in \mathcal{I}}\{d_i\}$ identify the \textit{minimum} and \textit{maximum graph degree}, respectively,  while  $\mathrm{vol}(\mathcal{G}) = \sum_{i \in \mathcal{I}} d_{i}$ corresponds to the graph volume.
Note that, for \textit{regular} graphs  it holds that $d_m=d_M=c >0$, and furthermore $c=n-1$ for \textit{complete} graphs.

	\subsection{Problem statement}
	\label{sec:theOptimizationProblem}

	Given a SN modeled as in Section~\ref{sec:graph_model},  each $i$-th {sensor node}, $i \in \mathcal{I}$, is assumed to be characterized by a one-dimensional scalar attribute $x_i \in \mathbb{R}$  identifying the \textit{$i$-th {node} state} and consisting in a measurable physical quantity, and by a set of \textit{relative  measurements}  $\mathcal{M}_i~=~\{\tilde{x}_{ij}~\in~\mathbb{R} \; | \; j \!\in \!\mathcal{N}_i\}$ where the recorded $\tilde{x}_{ij}$ corresponds to the difference between the neighbor state $x_j$ and the sensing {node} state $x_i$ corrupted by some noise.

	In this scenario, the SERM problem consists in the determination of the set $\{\hat{x}_i \in \mathbb{R}, \forall i \in \mathcal{I}\}$ that allows to best approximate the true system state $\{{x}_i \in \mathbb{R}, \forall i \in \mathcal{I}\}$ and to be consistent with the set of the existing relative measurements. Such an issue can be solved resorting to the popular LS paradigm as formalized in the following.
	
	\begin{SERMprob}
		\label{problem:distr_est_from_rel_meas} 
		Consider a $n$-{nodes} SN modeled by an undirected connected graph $\mathcal{G}= (\mathcal{V},\mathcal{E})$ and characterized by the set $\mathcal{M}= \bigcup_{v_i \in \mathcal{V}} \mathcal{M}_i$ of relative measurements. Introducing the vectors $\hat{\mathbf{x}} = \colvec{
			\hat{x}_{1} \; \ldots \; \hat{x}_{n}
		}^\top \in \mathbb{R}^{n}$, $\mathbf{x} = \colvec{
			x_{1} \; \ldots \; x_{n}
		}^\top \in \mathbb{R}^{n}$ that identify respectively the \textit{system state estimate} and the \textit{true system state}, the SERM problem  consists in the resolution of the following convex minimization
		\begin{align}
			\label{eq:cost}
			\hat{\x} = \underset{\x \in \Rset{n}}{\arg \min} \; h(\x),
		\end{align}
		where the (convex) cost function $h(\cdot) : \mathbb{R}^n \rightarrow \mathbb{R}$ is such that
		\begin{align} 
			\label{eq:cost_function}
			h(\x) & =  {\textstyle\frac{1}{2}} {\textstyle\sum}_{i \in \Vmc} {\textstyle\sum}_{j \in \Nmc_{i}}  (x_{i}-x_{j}+\tilde{x}_{ij})^{2}.
		\end{align}
	\end{SERMprob}

	\subsection{Centralized vs distributed solution}\label{sec:centralized_distributed_solution}

{As proven in~\cite{fabris2019distributed}, the (minimum norm) \textit{centralized solution} $\hat{\mathbf{x}}^\star_C \in \mathbb{R}^{n}$ of the minimization~\eqref{eq:cost} is given by
		\begin{equation}
			\hat{\x}^\star_C = {\textstyle\frac{1}{2}} \mathbf{L}^{\dagger} {\tilde{\mathbf{x}}}. 	\label{eq:centrSol}
		\end{equation}  
		with $\tilde{\mathbf{x}}  = \colvec{\textstyle \sum_{j \in \Nmc_{1}} (\tilde{x}_{j1}-\tilde{x}_{1j}) \; \cdots \;\ \sum_{j \in \Nmc_{n}} (\tilde{x}_{jn}-\tilde{x}_{nj})}^{\top} \in \Rset{n}$, $\tilde{\mathbf{x}} \notin \ker(\mathbf{L}) \setminus \left\lbrace \mathbf{0}_{n}\right\rbrace $, and $\mathbf{L}^\dagger\in\mathbb{R}^{n \times n}$  denoting the pseudo-inverse of the Laplacian matrix.}
Note that the computation of~\eqref{eq:centrSol} rests upon the knowledge of all the {nodes} interactions and all the relative measurements.

On the other hand, 
the first order optimality condition $[\nabla_{\x} h(\hat{\x})]_{i} = 0$ involving the gradient $\nabla_{\x} h(\x) \in \mathbb{R}^n$ of cost function~\eqref{eq:cost_function} with respect to $\x$ yields
	\begin{equation}\label{eq:distrUpdateScalar}
		\hat{x}_{i} =  d_i^{-1}\left({\textstyle\sum}_{j \in {\cal N}_i} \hat{x}_{j} + {\textstyle\frac{1}{2}}  {\textstyle\sum}_{j \in \Nmc_{i}} (\tilde{x}_{ji}-\tilde{x}_{ij}) \right).
	\end{equation}
	The estimation of $i$-th {sensor node} state in~\eqref{eq:distrUpdateScalar} depends exclusively on the state estimate of its neighbors and on its set of relative measurements, thus suggesting the adoption of a distributed paradigm in the minimization~\eqref{eq:cost}. In this direction, the following discrete system can be derived from~\eqref{eq:distrUpdateScalar} as an update rule for $\hat{\x} \in \mathbb{R}^{n}$ driven by the measurements:
	\begin{equation}\label{eq:distrUpdateVector1}
		\Sigma_{0}: \quad \hat{\x}(k+1) = \mathbf{F}_{0} \hat{\x}(k) + \mathbf{u}_{0}(\tilde{\mathbf{x}}),
	\end{equation}
		where $\mathbf{F}_{0}=\mathbf{D}^{-1} \mathbf{A} \in \mathbb{R}^{n \times n}$ and $\mathbf{u}_{0}(\tilde{\mathbf{x}}) = {\textstyle\frac{1}{2}}\mathbf{D}^{-1}\tilde{\mathbf{x}} \in \mathbb{R}^n$ correspond to the adjacency matrix and the measurements vector normalized by the degree matrix, respectively.
Note that, applying~\eqref{eq:distrUpdateVector1}, the $i$-th {sensor node} estimate $\hat{x}_i(k) \in \mathbb{R}$ at the \mbox{$k$-th} step affects its neighbors' estimate at the $(k+1)$-th step, but it is not considered for the recursive self-estimate.

As explained in~\cite{fabris2020proximal}, the convergence behavior of the scheme $\Sigma_0$ is determined by the second largest eigenvalue (in modulus) of the state matrix $\mathbf{F}_0$. Such a row-stochastic  matrix has real spectrum 
\begin{equation}
    \Lambda^{\mathbf{F}_0} \!=\! \{ \lambda^{\mathbf{F}_{0}}_i \in \left[-1,1\right] \; \vert \; 1 \!=\! \lambda^{\mathbf{F}_{0}}_0 \!>\! \lambda^{\mathbf{F}_{0}}_1 \!\geq\! \cdots \!\geq\! \lambda^{\mathbf{F}_{0}}_{n-1} \}, 
\end{equation} 
and this is related the (real) spectrum of normalized Laplacian matrix $\normlap$, since 
\begin{equation} \label{eq:eigsNLandF0}
\Lambda^{\normlap} = \{ \lambda^{\normlap}_i=1-\lambda_{i}^{\F_{0}} \; \vert \; 0=\lambda_{0}^{\normlap} <\lambda_{1}^{\normlap}  \leq \cdots \leq \lambda_{n-1}^{\normlap} \}.
\end{equation}
In detail, the system state estimation driven by~\eqref{eq:distrUpdateVector1} is guaranteed to converge toward the minimum norm solution~\eqref{eq:centrSol} if $\lambda_{n-1}^{\mathbf{F}_0}\neq-1$, namely $\lambda_{n-1}^{\normlap} \neq 2$,  and this condition is fulfilled if and only if the graph $\mathcal{G}$ representing the SN is not bipartite~\cite{chung1997spectral}. 
In the bipartite case, instead, the estimation convergence can be achieved only if it exists the possibility to act on the {nodes} interplay through $\mathbf{A}$ or, alternatively, by means of the introduction of a regularization parameter. 

	\section{A General Regularized Distributed Solution} 
	\label{sec:GRDS}

Regularization techniques can in general be exploited to improve the convergence properties also in non-bipartite scenarios. In the light of this fact, a novel GRDS for the SERM problem is presented here, as a first original contribution. 

	Based on the introduction of some suitable weights acting as regularization parameters, the proposed iterative approach entails the recursive use of the self-estimate, opportunely combined with the neighbors estimate and the set of locally available relative measurements. 
	From a graph-based perspective, the designed solution implies the insertion of weighted self-loops in the topology modeling a given SN.
	
	Formally, the \emph{real regularization parameter} $q_{i} \in \mathcal{Q} \subseteq \mathbb{R}$ is introduced, associated to the $i$-th {sensor node}, $\forall i \in \mathcal{I}$. Assuming {$\mathcal{Q} = (-1,1)$} and multiplying both sides of~\eqref{eq:distrUpdateScalar} by term $(1-q_{i})$ it follows that
	\begin{equation}\label{eq:qimultbothsides}
		\hat{x}_{i} = q_{i}\hat{x}_{i} + \frac{1-q_{i}}{d_i}\left(\sum\limits_{j \in {\cal N}_i} \hat{x}_{j} + \frac{1}{2}  \sum_{j \in \Nmc_{i}} (\tilde{x}_{ji}-\tilde{x}_{ij}) \right).
	\end{equation}
	 The resulting~\eqref{eq:qimultbothsides} still optimizes $h(\x)$ in~\eqref{eq:cost_function} and can be interpreted as a filter. 
	 This leads to the definition of the following iterative update rule 
	\begin{equation}\label{eq:sigmaQ}
		\Sigma_{\Q}: \quad \hat{\x}(k+1) = \F_{\Q}\hat{\x}(k) + \u_{\Q}\left(\tilde{\mathbf{x}}\right),
	\end{equation}
	where both the state matrix $\F_{\Q} \in \mathbb{R}^{n \times n}$ and the input~vector $\u_{\Q}\left(\tilde{\mathbf{x}}\right) \in \mathbb{R}^n$ depends on the (diagonal) \emph{regularization matrix} $\mathbf{Q}=\diag{\left(\mathbf{q}\right)} \in \Rset{n\times n}$ with $\mathbf{q} = \colvec{ q_1 \; \cdots \; q_n}^\top \!\in \!\mathbb{R}^n$. Specifically, it holds that
	\begin{align}\label{eq:FQ}
		&\F_{\Q} = \Q + (\I_{n}-\Q)\F_{0}, \\ \label{eq:uQ}
		& \u_{\Q}\left(\tilde{\mathbf{x}}\right) = (\I_{n}-\Q)\u_{0}(\tilde{\mathbf{x}}).
	\end{align}
 Note that the regularized scheme $\Sigma_{\Q}$ generalizes $\Sigma_{0}$: the update rules~\eqref{eq:distrUpdateScalar} and~\eqref{eq:qimultbothsides} coincide when $q_i = 0$, $\forall i \in \mathcal{I}$. 
	
			\subsection{Convergence properties}
		
	The convergence of the  scheme $\Sigma_{\Q}$ is affected by the spectral properties of the state matrix~\eqref{eq:FQ}, which, in turns, depends on the regularization matrix $\Q$. In particular, the distributed update rule~\eqref{eq:sigmaQ} is not ensured to be convergent for any choice of $\{q_i \in \mathcal{Q}, i \in \mathcal{I}\}$ as clarified in the following.

	First, observe that, for any selection of $\Q$, the spectrum of the matrix $\mathbf{F}_\Q$ includes the eigenvalue $\lambda_{0}^{\F_{\Q}}=1$. In addition, the following properties also hold. 
			\begin{lem}\label{lem:eigsFQareReal}
		For any $\Q \in \mathbb{R}^{n \times n}$, all the eigenvalues of the corresponding matrix $\F_{\Q}$ are real, i.e., $\lambda_{i}^{\F_{\Q}} \in \mathbb{R}$ $ \forall i\in \Imc_{0}$.
	\end{lem}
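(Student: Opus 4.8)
The plan is to reduce the claim to a statement about a symmetric matrix by means of a diagonal similarity transformation, exploiting that $\D$, $\Q$ and $\I_n-\Q$ are all diagonal and hence mutually commuting. First I would conjugate $\F_{\Q}$ by $\D^{1/2}$, which is well defined and invertible since $\graph$ is connected and all degrees are therefore positive. Using $\D^{1/2}\F_{0}\D^{-1/2}=\D^{-1/2}\A\D^{-1/2}=\I_n-\normlap$ together with $\D^{1/2}\Q\D^{-1/2}=\Q$ and the identity $\Q+(\I_n-\Q)=\I_n$, the definition~\eqref{eq:FQ} yields $\D^{1/2}\F_{\Q}\D^{-1/2}=\Q+(\I_n-\Q)(\I_n-\normlap)=\I_n-(\I_n-\Q)\normlap$. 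Since similar matrices share their spectra, it suffices to prove that $\I_n-(\I_n-\Q)\normlap$ has real eigenvalues, or equivalently that the product $(\I_n-\Q)\normlap$ does, because each $\lambda_i^{\F_{\Q}}$ equals $1-\lambda$ for some eigenvalue $\lambda$ of $(\I_n-\Q)\normlap$.

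The core of the argument is then to establish that $(\I_n-\Q)\normlap$ has real spectrum, and this is the step I expect to be the main obstacle: for a general real $\Q$ the diagonal factor $\I_n-\Q$ need not be sign definite, and the product of an indefinite symmetric matrix with a symmetric matrix can have complex eigenvalues, so one cannot simply invoke the usual symmetric-times-definite result. The key fact to use instead is that $\normlap$ is symmetric positive semidefinite, its spectrum being nonnegative as recalled in~\eqref{eq:eigsNLandF0}, hence it admits a real symmetric square root $\normlap^{1/2}$. Writing $(\I_n-\Q)\normlap=\bigl[(\I_n-\Q)\normlap^{1/2}\bigr]\normlap^{1/2}$ and applying the elementary fact that $\X\Y$ and $\Y\X$ share the same nonzero eigenvalues, the nonzero eigenvalues of $(\I_n-\Q)\normlap$ coincide with those of $\normlap^{1/2}(\I_n-\Q)\normlap^{1/2}$, which is a genuine symmetric matrix and therefore has real spectrum; the remaining eigenvalues are zero, hence real as well. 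This shows that all eigenvalues of $(\I_n-\Q)\normlap$, and consequently all the $\lambda_i^{\F_{\Q}}$, are real.

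Finally, I would note that in the regime of interest $\mathcal{Q}=(-1,1)$ one has $\I_n-\Q\succ 0$, so a shorter route is available: conjugating $\F_{\Q}$ directly by $(\I_n-\Q)^{-1/2}\D^{1/2}$ produces the explicitly symmetric matrix $\I_n-(\I_n-\Q)^{1/2}\normlap(\I_n-\Q)^{1/2}$, from which reality of the spectrum is immediate. The square-root and $\X\Y$-versus-$\Y\X$ argument given above is precisely what upgrades the conclusion to arbitrary real $\Q$, where $\I_n-\Q$ may fail to be definite and the direct congruence is no longer available.
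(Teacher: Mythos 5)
Your proof is correct, and its main argument takes a genuinely different route from the paper's. The paper invokes the symmetrizability criterion: a matrix $\N$ has real spectrum if there is a positive definite diagonal $\M$ with $\M\N=\N^{\top}\M$, and takes $\M=(\I_n-\Q)^{-1}\D$, $\N=\F_{\Q}$ --- which is exactly the ``shorter route'' you mention at the end, since that criterion amounts to conjugating $\F_{\Q}$ by $\M^{1/2}=(\I_n-\Q)^{-1/2}\D^{1/2}$ to reach a symmetric matrix. Your primary argument instead conjugates only by $\D^{1/2}$ to obtain $\I_n-(\I_n-\Q)\normlap$ and then uses the positive semidefinite square root $\normlap^{1/2}$ together with the fact that $\X\Y$ and $\Y\X$ share their nonzero spectrum, so that the relevant eigenvalues are those of the symmetric matrix $\normlap^{1/2}(\I_n-\Q)\normlap^{1/2}$. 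What this buys is exactly what you point out: the paper's choice of $\M$ is positive definite only when every $q_i<1$ (which is guaranteed by the standing assumption $\mathcal{Q}=(-1,1)$ but not by the lemma's literal ``any $\Q$'' phrasing), whereas your square-root argument needs no sign condition on $\I_n-\Q$ and so actually proves the statement for every real diagonal $\Q$, including the cases where $\I_n-\Q$ is singular or indefinite and the direct congruence breaks down. The paper's version is shorter and sufficient for the regime it uses downstream; yours is the one that matches the generality the lemma claims.
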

	\begin{proof}
	The spectrum of a matrix $\N \in \mathbb{R}^{n \times n}$ is real if it exists a positive definite diagonal matrix $\M \in \mathbb{R}^{n \times n}$ 
		such that $\M \N = \N^{\top} \M$. 
		Therefore, since $(\I-\Q)^{-1}\D \succ 0$ for any choice of $\Q$,  the thesis is proven selecting $\M = (\I-\Q)^{-1}\D $ and $\N = \F_\Q$.
	\end{proof}
	\begin{lem}\label{lem:boundsforFQ}
		For any $\Q \in \mathbb{R}^{n \times n}$, the eigenvalues of the corresponding matrix $\F_\Q$ are bounded. In particular, it holds 		\begin{equation}\label{eq:eigenboundsq}
			{\lambda}_{i}^{\F_{\underline{q}}} \leq \lambda_{i}^{\F_{\Q}} \leq {\lambda}_{i}^{\F_{\overline{q}}},  \quad \forall i \in \mathcal{I}_i,
					\end{equation}
					with $\F_{\underline{q}}=\underline{q}\I_{n}+(1-\underline{q}) \F_{0}$ and $\F_{\overline{q}}=\overline{q}\I_{n}+(1-\overline{q}) \F_{0}$, given 
								 $\underline{q}= \min_{i \in \mathcal{I}} q_{i}$ and $\overline{q}= \max_{i \in \mathcal{I}} q_{i}$.
	\end{lem}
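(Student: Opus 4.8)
The plan is to reduce $\F_{\Q}$ to a symmetric matrix whose eigenvalues can be controlled through the Courant--Fischer min--max characterization. Reusing the similarity introduced in the proof of Lemma~\ref{lem:eigsFQareReal}, conjugating $\F_{\Q}$ by $\M^{1/2}=(\I_{n}-\Q)^{-1/2}\D^{1/2}$ produces the symmetric matrix $\widetilde{\F}_{\Q}=\M^{1/2}\F_{\Q}\M^{-1/2}=\Q+(\I_{n}-\Q)^{1/2}\Smc(\I_{n}-\Q)^{1/2}$ with $\Smc=\D^{-1/2}\A\D^{-1/2}=\I_{n}-\normlap$. Substituting $\Smc=\I_{n}-\normlap$ and using $\Q+(\I_{n}-\Q)=\I_{n}$ collapses this to the compact form
\begin{equation}\label{eq:symFQ}
\widetilde{\F}_{\Q}=\I_{n}-\P\normlap\P, \qquad \P=(\I_{n}-\Q)^{1/2}\succ 0,
\end{equation}
which is legitimate since $q_{i}\in(-1,1)$ guarantees $1-q_{i}>0$. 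As $\widetilde{\F}_{\Q}$ and $\F_{\Q}$ share their (real, by Lemma~\ref{lem:eigsFQareReal}) spectrum, it suffices to bound the eigenvalues of $\P\normlap\P$: writing $\nu_{i}(\Q)$ for its $i$-th smallest eigenvalue, one has $\lambda_{i}^{\F_{\Q}}=1-\nu_{i}(\Q)$, and the scalar specializations $\Q=\underline{q}\I_{n}$ and $\Q=\overline{q}\I_{n}$ give $\nu_{i}(\underline{q})=(1-\underline{q})\lambda_{i}^{\normlap}$ and $\nu_{i}(\overline{q})=(1-\overline{q})\lambda_{i}^{\normlap}$. Hence proving~\eqref{eq:eigenboundsq} is equivalent to establishing $(1-\overline{q})\lambda_{i}^{\normlap}\leq\nu_{i}(\Q)\leq(1-\underline{q})\lambda_{i}^{\normlap}$.

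Next I would invoke the Courant--Fischer theorem for the symmetric matrix $\P\normlap\P$ and remove the congruence factor through the change of variable $\y=\P\w$. This turns the ordinary Rayleigh quotient of $\P\normlap\P$ into the generalized Rayleigh quotient $\y^{\top}\normlap\y/(\y^{\top}\P^{-2}\y)$ of the pencil $(\normlap,\P^{-2})$, so that
\begin{equation}\label{eq:minmaxnu}
\nu_{i}(\Q)=\min_{\dim V=i+1}\ \max_{\y\in V\setminus\{\zeros\}}\frac{\y^{\top}\normlap\y}{\y^{\top}\P^{-2}\y}.
\end{equation}
Because $\P^{-2}=(\I_{n}-\Q)^{-1}$ is diagonal with entries $(1-q_{i})^{-1}\in[(1-\underline{q})^{-1},(1-\overline{q})^{-1}]$, it is sandwiched as $(1-\underline{q})^{-1}\I_{n}\preceq\P^{-2}\preceq(1-\overline{q})^{-1}\I_{n}$.

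The crux is then a pointwise comparison of Rayleigh quotients. Since $\normlap\succeq 0$ forces $\y^{\top}\normlap\y\geq 0$, the two-sided bound on $\P^{-2}$ yields, for every $\y\neq\zeros$,
\begin{equation}\label{eq:rayleighsandwich}
(1-\overline{q})\,\frac{\y^{\top}\normlap\y}{\y^{\top}\y}\ \leq\ \frac{\y^{\top}\normlap\y}{\y^{\top}\P^{-2}\y}\ \leq\ (1-\underline{q})\,\frac{\y^{\top}\normlap\y}{\y^{\top}\y}.
\end{equation}
Inserting~\eqref{eq:rayleighsandwich} into~\eqref{eq:minmaxnu} and noting that taking $\max$ over a fixed subspace and then $\min$ over subspaces both preserve these inequalities, the scalar factors $(1-\overline{q})$ and $(1-\underline{q})$ pull out of the min--max, leaving exactly the Courant--Fischer characterization of $\lambda_{i}^{\normlap}$. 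This delivers $(1-\overline{q})\lambda_{i}^{\normlap}\leq\nu_{i}(\Q)\leq(1-\underline{q})\lambda_{i}^{\normlap}$, which is~\eqref{eq:eigenboundsq} after the substitution $\lambda_{i}^{\F_{\Q}}=1-\nu_{i}(\Q)$; the index $i=0$ holds with equality since $\lambda_{0}^{\normlap}=0$. I expect the main obstacle to be bookkeeping rather than conceptual: deriving the symmetric form~\eqref{eq:symFQ} correctly and keeping the ascending/descending eigenvalue orderings of $\normlap$ and $\F_{\Q}$ consistent when applying min--max. Notably, a tempting shortcut via the Loewner order fails here, since $\widetilde{\F}_{\overline{q}}-\widetilde{\F}_{\Q}$ is in general indefinite, so the index-wise argument through~\eqref{eq:rayleighsandwich} is essential.
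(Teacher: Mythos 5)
Your proof is correct, and it takes a genuinely different route from the paper's. The paper argues directly on the nonsymmetric matrices: it invokes the Gershgorin circle theorem to get $(\Q-\underline{q}\I_n)(\I_n-\F_0)\succeq 0$ and $(\overline{q}\I_n-\Q)(\I_n-\F_0)\succeq 0$ (in the paper's eigenvalue-based sense of $\succeq$), concludes $\F_{\overline{q}}\succeq\F_\Q\succeq\F_{\underline{q}}$, and reads off the index-wise eigenvalue ordering. You instead push the similarity from Lemma~\ref{lem:eigsFQareReal} all the way to the symmetric normal form $\widetilde{\F}_\Q=\I_n-\P\normlap\P$ and obtain the index-wise comparison from Courant--Fischer applied to the generalized Rayleigh quotient of the pencil $(\normlap,\P^{-2})$, using only the diagonal sandwich $(1-\underline{q})^{-1}\I_n\preceq\P^{-2}\preceq(1-\overline{q})^{-1}\I_n$ and $\normlap\succeq 0$. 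What your version buys is rigor at precisely the point where the paper is thinnest: passing from ``all eigenvalues of $\F_{\overline{q}}-\F_\Q$ are nonnegative'' to ``$\lambda_i^{\F_{\overline{q}}}\geq\lambda_i^{\F_\Q}$ for every $i$'' is a Weyl-type monotonicity that is automatic for symmetric matrices but not for merely diagonalizable ones with real spectra, and $\F_\Q$ and $\F_{\underline{q}}$ are not symmetrized by a common congruence; your closing observation that $\widetilde{\F}_{\overline{q}}-\widetilde{\F}_\Q$ is generally indefinite identifies exactly why that shortcut is delicate and why the subspace-wise min--max argument is the right tool. The paper's argument is shorter and stays closer to the graph-theoretic objects, but yours is the more self-contained justification of~\eqref{eq:eigenboundsq}. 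One shared caveat, not a flaw of your argument: both proofs implicitly need $q_i<1$ (so that $\I_n-\Q\succ 0$), which the lemma's ``for any $\Q$'' phrasing glosses over; you at least state this assumption explicitly.
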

	\begin{proof}
		Considering the matrix $\F_{\Q}$ as a function of $\Q$ and  exploiting the relation~\eqref{eq:eigsNLandF0},  the Gershgorin circle theorem 
		 ensures that the condition $\overline{q}\eye{n} \succeq \Q \succeq \underline{q} \eye{n}$  implies
		\begin{equation}\label{eq:boundsofQthroughF0proof}
			\begin{cases}
				(\Q-\underline{q}\eye{n}) (\eye{n}-\F_{0}) \succeq 0, \\
				(\overline{q}\eye{n}-\Q) (\eye{n}-\F_{0}) \succeq 0.
			\end{cases}
		\end{equation}
		Then, adding $\F_{0}$ to both sides of inequalities~\eqref{eq:boundsofQthroughF0proof} and applying Lemma \ref{lem:eigsFQareReal}, it follows that
		\begin{equation}
		\hspace{-6pt}
			\begin{cases}
				\F_{\Q}=\Q+(\I-\Q)\F_{0} \succeq \underline{q}\eye{n}+(1-\underline{q})\F_{0} = \F_{\underline{q}}, \\
				\F_{\overline{q}} = \overline{q}\eye{n}+(1-\overline{q})\F_{0} \succeq \Q+(\I-\Q)\F_{0}=\F_{\Q}.
			\end{cases} 
		\end{equation}
		As a consequence, for any $\Q$, it holds that $\F_{\overline{q}}  \succeq \F_\Q \succeq \F_{\underline{q}}$ and, thus, the spectral relation~\eqref{eq:eigenboundsq} is guaranteed. 
	\end{proof}
	
	Then, next theorem shows that, adopting the  scheme $\Sigma_{\Q}$, the system state estimate  asymptotically converges towards the centralized solution~\eqref{eq:centrSol} only for a suitable selection of the regularization parameters set that ensures some spectral properties  of the state matrix  $\F_{\Q}$.

\begin{table*}[t!]
		\begin{center}
		\resizebox{\textwidth}{!}{
			\renewcommand{\arraystretch}{1.5}
			\begin{tabular}{c||l|l|l||l|l|}
				\cline{2-6}
 &	\multicolumn{3}{c||}{state-of-the-art} & 	\multicolumn{2}{c|}{new contribution} \\
				\cline{2-6}
				& State Matrix (SM) & {Input Vector} & Parameter Domain & Parameter Ext Domain & Opt Parameter Selection \\  
								\hline \hline
	\multicolumn{1}{|c||}{{$\Sigma_\eta$}} &  $\F_\eta = \eta \I_n +(1-\eta)\F_{0}$ &  $\u_{\eta} = (1-\eta)\u_{0}(\tilde{\mathbf{x}})$ & $\mathcal{Q}_\eta=\left(0,1\right)$  & {$\widecheck{\mathcal{Q}}_\eta=\left(1-\frac{2}{\lambda_{n-1}^{\normlap}}, 1\right)$} & { $\eta^{\ast} = 1-\varsigma_{\normlap}^{-1}$}  \\ 
								\hline
				\multicolumn{1}{|c||}{{$\Sigma_\rho$}} &  $\F_\rho = \left( \D + {\textstyle\frac{\rho}{2}} \I_{n}  \right)^{-1} \left(  \A + {\textstyle\frac{\rho}{2}} \I_{n}    \right)$  &  {$\u_{\rho} =  \left( \D + {\textstyle\frac{\rho}{2}} \I_{n}  \right)^{-1} \D \u_{0}(\tilde{\mathbf{x}})$}& $\mathcal{Q}_\rho=\left(0,+\infty\right)$ &  $\widecheck{\mathcal{Q}}_\rho=\left(d_{m}\left(\lambda_{n-1}^{\normlap}-2\right),+\infty\right)$ & { $\rho^{\ast} \in 2(\varsigma_{\normlap}-1)\cdot\left[s_{m},s_{M}\right]$}\\
				\hline
				\multicolumn{1}{|c||}{{$\Sigma_\epsilon$}} &  $\F_{\epsilon} = \eye{\nonodes}-\epsilon \lap$ & {$\u_{\epsilon} =  \epsilon \D  \u_{0}(\tilde{\mathbf{x}})$} & $\mathcal{Q}_\epsilon=\left(0,\frac{1}{d_M}\right)$ & {$\overline{\mathcal{Q}}_\epsilon = \left(0,\frac{2}{\lambda_{n-1}^{\L}}\right)$}  & { $\epsilon^{\ast} = \varsigma_{\lap}^{-1}$} \\
				\hline
			\end{tabular}
		}
	\end{center}
	\caption{main features of the $\Sigma_{\eta}$, $\Sigma_{\rho}$, and $\Sigma_{\epsilon}$ proposed in~\cite{fabris2019distributed,fabris2020proximal,olfati2007consensus} and interpreted as GRDS realizations - the optimal selection of $\rho$ rests on a binary-search in the reported interval wherein $s_{m} = d_{m}$ and $s_{M} = d_{M}$  if $\varsigma_{\normlap} \geq 1$ and $s_{m} = d_{M}$ and 	$s_{M} = d_{m}$ otherwise.
	}
	\label{tab:schemes}
\end{table*}

\begin{thm} \label{thm:convergencetocentr}
		{If matrix $\Q$ is selected so that for the corresponding matrix $\F_{\Q}$ it results 
		\begin{equation}
		    \vert \lambda_{i}^{\F_{\Q}}\vert < 1, \quad \forall i \in \mathcal{I}_1, \label{eq:convergence_condition}
		\end{equation}
		then the system state estimation driven by~\eqref{eq:sigmaQ}  asymptotically converges to the centralized solution~\eqref{eq:centrSol} in terms of relative differences.
		Specifically, under the given condition~\eqref{eq:convergence_condition}, it holds that $\hat{x}_{\Q,i}^\star-\hat{x}_{\Q,j}^\star = \hat{x}_{C,i}^{\star}-\hat{x}_{C,j}^{\star}$, $\forall i,j \in \mathcal{I}$ such that $e_{ij} \in \Emc$, with $\hat{\mathbf{x}}_C^\star = \colvec{\hat{x}_{C,1}^\star\; \cdots\; \hat{x}_{C,n}^\star}^\top$ defined as in~\eqref{eq:centrSol} and $\hat{\mathbf{x}}^\star_\Q\in \mathbb{R}^n$ such that $\hat{\mathbf{x}}^\star_\Q = \colvec{\hat{x}_{\Q,1}^\star\; \ldots \;\hat{x}_{\Q,n}^\star}^\top= \lim_{k \rightarrow +\infty} \hat{\x}(k)$ with $\hat{\x}(k)$ evolving according to~\eqref{eq:sigmaQ}.}
	\end{thm}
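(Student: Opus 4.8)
The plan is to split the evolution~\eqref{eq:sigmaQ} into a component along $\ones_n$ and a complementary contractive component, and then to pin down the limit via the associated fixed-point equation. First I would note that $\ones_n$ spans the eigenspace of $\F_\Q$ for $\lambda_0^{\F_\Q}=1$: since $\F_0=\D^{-1}\A$ is row-stochastic, $\F_0\ones_n=\ones_n$, so~\eqref{eq:FQ} gives $\F_\Q\ones_n=\Q\ones_n+(\I_n-\Q)\ones_n=\ones_n$. By Lemma~\ref{lem:eigsFQareReal} and the $\M$-symmetry $\M\F_\Q=\F_\Q^\top\M$ established in its proof with $\M=(\I_n-\Q)^{-1}\D\succ0$, the matrix $\F_\Q$ is self-adjoint for the inner product $\x^\top\M\y$; hence $\mathbb{R}^n$ admits an $\M$-orthogonal eigenbasis and decomposes as $\mathbb{R}^n=\mathrm{span}(\ones_n)\oplus W$, with $W=\{\w:\ones_n^\top\M\w=0\}$ invariant under $\F_\Q$. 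Condition~\eqref{eq:convergence_condition} then makes $\lambda_0^{\F_\Q}=1$ simple and the restriction $\F_\Q\vert_W$ a contraction.

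Next I would project the forcing term onto this splitting. Because $\D$ and $\I_n-\Q$ are diagonal and thus commute, from~\eqref{eq:uQ} one computes $\M\u_\Q(\tilde{\x})=\tfrac12\tilde{\x}$, so the $\ones_n$-component of $\u_\Q$ is controlled by $\ones_n^\top\M\u_\Q=\tfrac12\ones_n^\top\tilde{\x}$. The key observation is that $\ones_n^\top\tilde{\x}=\sum_{i\in\Vmc}\sum_{j\in\Nmc_i}(\tilde{x}_{ji}-\tilde{x}_{ij})=0$, by the antisymmetry of the summation over the undirected edge set. Therefore $\u_\Q(\tilde{\x})\in W$, and writing $\hat{\x}(k)=c\,\ones_n+\w(k)$ the scalar $c$ remains constant (no linear drift along $\ones_n$), while $\w(k+1)=\F_\Q\vert_W\,\w(k)+\u_\Q(\tilde{\x})$ converges since $\F_\Q\vert_W$ is a contraction. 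This establishes that the limit $\hat{\x}^\star_\Q=\lim_{k\to\infty}\hat{\x}(k)$ exists and solves $(\I_n-\F_\Q)\hat{\x}^\star_\Q=\u_\Q(\tilde{\x})$.

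Finally I would identify this limit with the centralized estimate up to a shift along $\ones_n$. Factoring~\eqref{eq:FQ} yields $\I_n-\F_\Q=(\I_n-\Q)(\I_n-\F_0)=(\I_n-\Q)\D^{-1}\L$, while~\eqref{eq:uQ} gives $\u_\Q(\tilde{\x})=(\I_n-\Q)\tfrac12\D^{-1}\tilde{\x}$. As $\I_n-\Q$ and $\D^{-1}$ are invertible for $q_i\in(-1,1)$, the fixed-point equation collapses to $\L\hat{\x}^\star_\Q=\tfrac12\tilde{\x}$, which is precisely the normal equation satisfied by the centralized solution~\eqref{eq:centrSol}. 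Since $\graph$ is connected, $\ker(\L)=\mathrm{span}(\ones_n)$, so $\hat{\x}^\star_\Q=\hat{\x}^\star_C+\alpha\ones_n$ for some $\alpha\in\mathbb{R}$; subtracting entrywise cancels $\alpha$ and yields $\hat{x}_{\Q,i}^\star-\hat{x}_{\Q,j}^\star=\hat{x}_{C,i}^\star-\hat{x}_{C,j}^\star$ for every pair $i,j$, in particular whenever $e_{ij}\in\Emc$.

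I expect the delicate point to be the convergence itself in the presence of the unit eigenvalue: the argument succeeds only because the forcing term carries no persistent component along the marginally stable mode $\ones_n$, which is guaranteed by $\ones_n^\top\tilde{\x}=0$. Were this not the case, $\hat{\x}(k)$ would drift linearly and no limit would exist; everything else is linear-algebraic bookkeeping resting on Lemma~\ref{lem:eigsFQareReal} and on the factorization $\I_n-\F_\Q=(\I_n-\Q)\D^{-1}\L$.
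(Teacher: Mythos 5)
Your proof is correct, and it reaches the paper's conclusion by a noticeably different route, so a comparison is worthwhile. The paper works from the explicit power-sum solution $\hat{\x}(k)=\F_\Q^k\hat{\x}_0+\bigl(\sum_{l=0}^{k-1}\F_\Q^l\bigr)\u_\Q(\tilde{\x})$, posits an equilibrium $\hat{\x}_{eq}$ satisfying $2\L\hat{\x}_{eq}=\tilde{\x}$, telescopes the geometric sum against $(\I_n-\F_\Q)\hat{\x}_{eq}$, and invokes Perron--Frobenius to get $\lim_k\F_\Q^k=\ones\v_\Q^\top$. You instead exploit the $\M$-self-adjointness from Lemma~\ref{lem:eigsFQareReal} to split $\mathbb{R}^n=\mathrm{span}(\ones_n)\oplus W$, check that the forcing lies in $W$ via $\ones_n^\top\tilde{\x}=0$, and read off the limit from the fixed-point equation after factoring $\I_n-\F_\Q=(\I_n-\Q)\D^{-1}\L$. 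Your version is the more self-contained of the two: the identity $\ones_n^\top\tilde{\x}=0$ is exactly what guarantees that the equilibrium equation $2\L\hat{\x}_{eq}=\tilde{\x}$ is solvable at all (the paper uses this equilibrium without verifying its existence), and your spectral splitting sidesteps Perron--Frobenius, which does not literally apply to $\F_\Q$ when some $q_i<0$ (the diagonal entries of $\F_\Q$ are the $q_i$, so the matrix need not be nonnegative on the extended domain $\widecheck{\mathcal{Q}}$, even though the limit formula still holds by diagonalizability). What the paper's computation buys in exchange is an explicit expression for the offset $\beta$ in $\hat{\x}^\star_\Q=\hat{\x}^\star_C+\beta\ones_n$ in terms of $\v_\Q^\top\hat{\x}_0$, i.e., how the limit depends on the initialization; your argument shows the $\ones_n$-component is the constant $c=\ones_n^\top\M\hat{\x}_0/\ones_n^\top\M\ones_n$, which is the same information in different clothing. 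Both arguments hinge on the same two facts --- simplicity of the unit eigenvalue under~\eqref{eq:convergence_condition} and the reduction of the fixed-point equation to $2\L\x=\tilde{\x}$ --- so the substance agrees; only the machinery differs.
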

	\begin{proof}
	Accounting for the scheme $\Sigma_\Q$, one can verify that, given the initial condition $\hat{\x}(0) = \hat{\mathbf{x}}_{0}$, it holds
	\begin{equation}
		     \hat{\x}(k) = \mathbf{F}_{\Q}^{k} \hat{\mathbf{x}}_{0} + \left( \textstyle\sum_{l=0}^{k-1} \F_{\Q}^{l} \right) \mathbf{u}_{\Q}(\tilde{\x}) \label{eq:limit}
		    \end{equation}
and, at the equilibrium, one has $\hat{\mathbf{x}}_{eq} = \mathbf{F}_{\Q} \hat{\mathbf{x}}_{eq} + \mathbf{u}_{\Q}\left(\tilde{\mathbf{x}}\right)$. Taking into account~\eqref{eq:FQ}-\eqref{eq:uQ}, this implies
\begin{equation}
2\L \hat{\mathbf{x}}_{eq}= \tilde{\mathbf{x}}.  \label{eq:eq_condition}
\end{equation}
In particular, condition~\eqref{eq:eq_condition}  ensures that  a generic equilibrium solution for the scheme $\Sigma_\Q$ can be expressed as $\hat{\mathbf{x}}_{eq} = \hat{\x}^{\star}_C + \alpha \ones$,	where {$\hat{\x}^{\star}_C$ is the minimum norm centralized solution~\eqref{eq:centrSol}, while $ \alpha \in \Rset{}$ and $\{\ones\} = \ker(\mathbf{L})$}. \\
		Now,  by the Perron-Frobenius theory, condition~\eqref{eq:convergence_condition} is sufficient to ensure that $\lim_{k \rightarrow \infty}  \mathbf{F}_{\Q}^{k} =  \ones \mathbf{v}_{\Q}^{\top}$, 
	where $\mathbf{v}_{\Q} \in \Rset{n}$  denotes the (left) eigenvector  associated to the eigenvalue $\lambda^{\F_{\Q}}_{0} = 1$. 
	Then, under the equilibrium condition~\eqref{eq:eq_condition}, the limit $\lim_{k \rightarrow +\infty} \hat{\x}(k)$ converges towards a unique $\hat{\mathbf{x}}^\star_\Q $ since it follows from~\eqref{eq:limit}  that 
			\begin{align}
	    \lim_{k \rightarrow \infty} \hat{\x}(k) 
	    &=\left( \ones \mathbf{v}_{\Q}^{\top} \right) \hat{\mathbf{x}}_{0} \!+\!  \lim_{k \rightarrow \infty}  \left( (\eye{n}-\mathbf{F}_{\Q}^{k}) \hat{\mathbf{x}}_{eq} \right) \\
	    	    &=\left( \ones \mathbf{v}_{\Q}^{\top} \right) \hat{\mathbf{x}}_{0} \!+\! \left( \eye{n}-\ones \mathbf{v}_{\Q}^{\top} \right) \hat{\mathbf{x}}_{eq} 
	\end{align}
In particular, exploiting the fact that $\hat{\mathbf{x}}_{eq} = \hat{\x}^{\star}_C + \alpha \ones$, 
it yields	
		\begin{align}
			\hat{\mathbf{x}}_\Q^\star 
			&= \hat{\x}^{\star}_C + \ones \left (\mathbf{v}_{\Q}^{\top} \hat{\mathbf{x}}_{0} - \mathbf{v}_{\Q}^{\top} \hat{\x}^{\star}_C - \alpha \mathbf{v}_{\Q}^{\top}\ones + \alpha\right ) \\
			&= \hat{\x}^{\star}_C + \beta \ones,
		\end{align}
		where  $\beta \in \mathbb{R}$: the system state estimate provided by the scheme $\Sigma_{\Q}$  thus converges to the solution~\eqref{eq:centrSol}  in terms of relative differences.
		\end{proof}

	\subsection{Regularization parameter selection}
	
In the rest of the section, the attention is focused on the conditions on the set of parameters $\{q_i, \forall i \in \mathcal{I}\}$ guaranteeing the  validity of Theorem~\ref{thm:convergencetocentr}.
In this direction, in the light of Lemma~\ref{lem:boundsforFQ}, one can realize that some \textit{limit cases} exist for the fulfillment of condition~\eqref{eq:convergence_condition}. These correspond to the selection of the regularization matrix as a scalar matrix, namely $\Q = q \I_{n}$ (with $q<1$). Accounting for this fact, the next theorem 
	provides some (conservative) requirements on the regularization parameters selection sufficient to ensure the convergence of scheme $\Sigma_\Q$.

	\begin{thm} \label{thm:sufficientcondsimplstab}
		Condition~\eqref{eq:convergence_condition} in Theorem~\ref{thm:convergencetocentr} is satisfied for any diagonal matrix $\Q$ defined by selecting the $q_i,\forall i\in~\mathcal{I}$ 
		\begin{equation}\label{eq:constraintqiU}
			q_{i} \in \widecheck{\mathcal{Q}}=\left(\mu, 1 \right), \;\; \mu = 1-\frac{2}{\lambda_{n-1}^{\normlap}} \in \left[ -1+\frac{2}{n} , 0\right], 
		\end{equation}
		where $\lambda_{n-1}^{\normlap}$ is the largest eigenvalue of the normalized Laplacian, and $\mu = -1+2/n$ and $\mu = 0$ hold for the complete and bipartite graphs, respectively.
	\end{thm}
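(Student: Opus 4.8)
The plan is to reduce the general diagonal case to the two extreme \emph{scalar} cases $\F_{\underline{q}}$ and $\F_{\overline{q}}$ via Lemma~\ref{lem:boundsforFQ}, and to characterize exactly when a scalar regularization keeps the nontrivial spectrum inside the open unit interval. First I would compute the spectrum of the scalar matrix $\F_q = q\I_n + (1-q)\F_0$. Its eigenvalues are $\lambda_i^{\F_q} = q + (1-q)\lambda_i^{\F_0}$, and substituting the relation~\eqref{eq:eigsNLandF0} between the spectra of $\F_0$ and $\normlap$ yields the compact expression
\begin{equation}
\lambda_i^{\F_q} = 1 - (1-q)\,\lambda_i^{\normlap}, \qquad i \in \mathcal{I}_0.
\end{equation}
Since $\lambda_0^{\normlap}=0$ returns the trivial eigenvalue $\lambda_0^{\F_q}=1$ (consistent with Theorem~\ref{thm:convergencetocentr}), attention restricts to $i \in \mathcal{I}_1$, where $\lambda_i^{\normlap}>0$ by connectedness of $\graph$.

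Next I would impose $\vert \lambda_i^{\F_q}\vert<1$ for every $i \in \mathcal{I}_1$ and solve the two resulting inequalities. The upper bound $\lambda_i^{\F_q}<1$ reduces to $(1-q)\lambda_i^{\normlap}>0$, i.e.\ to $q<1$, and is therefore independent of $i$. The lower bound $\lambda_i^{\F_q}>-1$ rearranges to $q > 1 - 2/\lambda_i^{\normlap}$; as $1-2/x$ is increasing in $x$ and $\lambda_{n-1}^{\normlap}$ is the largest nonzero Laplacian eigenvalue, the binding constraint occurs at $i=n-1$ and reads $q > \mu = 1 - 2/\lambda_{n-1}^{\normlap}$. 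Hence, in the scalar case, $q\in(\mu,1)=\widecheck{\mathcal{Q}}$ is exactly equivalent to~\eqref{eq:convergence_condition}.

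To pass from scalars to a general diagonal $\Q$, I would invoke Lemma~\ref{lem:boundsforFQ}: if every $q_i\in\widecheck{\mathcal{Q}}$, then $\underline{q},\overline{q}\in\widecheck{\mathcal{Q}}$ as well, so the scalar analysis gives $\lambda_i^{\F_{\underline{q}}}>-1$ and $\lambda_i^{\F_{\overline{q}}}<1$ for all $i\in\mathcal{I}_1$. The spectral sandwiching~\eqref{eq:eigenboundsq} then chains these strict bounds as $-1<\lambda_i^{\F_{\underline{q}}}\le\lambda_i^{\F_\Q}\le\lambda_i^{\F_{\overline{q}}}<1$, delivering~\eqref{eq:convergence_condition}. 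The delicate point here is directional: the lower scalar bound must come from $\underline{q}$ and the upper from $\overline{q}$, and both inequalities in~\eqref{eq:eigenboundsq} must be strict, which is guaranteed precisely because $\widecheck{\mathcal{Q}}$ is an \emph{open} interval.

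Finally, to locate $\mu$ in $[-1+2/n,0]$ I would use two standard spectral facts for the normalized Laplacian of a connected graph. The upper bound $\lambda_{n-1}^{\normlap}\le 2$, with equality iff $\graph$ is bipartite, gives $\mu\le 0$ (and $\mu=0$ exactly in the bipartite case), recovering the threshold already flagged after~\eqref{eq:eigsNLandF0}. The lower bound $\lambda_{n-1}^{\normlap}\ge n/(n-1)$, with equality iff $\graph$ is complete, gives $\mu\ge -1+2/n$ (and $\mu=-1+2/n$ exactly in the complete case). I expect the main obstacle to be not the algebra but the correct citation and deployment of these two extremal characterizations of $\lambda_{n-1}^{\normlap}$ (see~\cite{chung1997spectral}); once they are in hand, the claimed range and the two equality cases follow from the monotonicity of $1-2/x$.
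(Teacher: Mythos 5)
Your proposal is correct and follows essentially the same route as the paper's proof: reduce the diagonal case to the scalar extremes $\F_{\underline{q}},\F_{\overline{q}}$ via Lemma~\ref{lem:boundsforFQ}, solve $\vert\lambda_i^{\F_q}\vert<1$ using the relation~\eqref{eq:eigsNLandF0}, and invoke the extremal values $\lambda_{n-1}^{\normlap}\in[n/(n-1),2]$ for the complete and bipartite cases. Your write-up is in fact more careful than the paper's (which conflates $\overline{q}$ and $\underline{q}$ in stating the limit case); the only nitpick is that the inner inequalities in~\eqref{eq:eigenboundsq} need not be strict --- strictness of the outer bounds $-1<\lambda_i^{\F_{\underline{q}}}$ and $\lambda_i^{\F_{\overline{q}}}<1$ suffices, exactly as your displayed chain already shows.
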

	\begin{proof}
	Based on~\eqref{eq:eigenboundsq}, condition~\eqref{eq:convergence_condition} is fulfilled only if $\lambda_i^{\F_{\underline{q}}} > -1$ and $\lambda_i^{\F_{\overline{q}}} <1$, $\forall i\in \Imc_{1}$. In the limit case $\Q = \overline{q} \I_n$, with $\overline{q} <1$,  the given requirement reduces to the condition $\lambda_i^{\F_{\underline{q}}}=\underline{q}+(1-\underline{q})\lambda_{i}^{\F_{0}}  > -1$, $\forall i\in \Imc_{1}$ that, given~\eqref{eq:eigsNLandF0}, implies 
		\begin{equation}
			\underline{q} > \mu=(\lambda^{\F_{0}}_{n-1}+1)/(\lambda^{\F_{0}}_{n-1}-1) = 1-2/\lambda_{n-1}^{\normlap}.
		\end{equation}
		To conclude, the parameter $\mu$ is an increasing function of the eigenvalue $\lambda_{n-1}^{\normlap} \in \left[n/(n-1), 2\right]$, which assumes extremal values in case of complete and bipartite graphs.
	\end{proof}

\section{Performance Assessment of the GRDS}
\label{sec:advantagesofGRDS}

Due to its generalized nature, the proposed GRDS scheme permits to extend the structure of the three distributed regularized schemes $\Sigma_{\eta}, \Sigma_{\rho}, \Sigma_{\epsilon}$, introduced in~\cite{fabris2019distributed,fabris2020proximal,olfati2007consensus}.
Interestingly, this extension pursues a twofold aim, namely $i)$ to serve as a performance assessment of the GRDS approach with respect to the existing literature, and $ii)$ to improve the convergence performance of the original schemes $\Sigma_{\eta}, \Sigma_{\rho}, \Sigma_{\epsilon}$ if revisited within the GRDS framework. 

Indeed, as summarized in Table~\ref{tab:schemes}, it is straightforward that $\Sigma_{\eta}, \Sigma_{\rho}, \Sigma_{\epsilon}$ can be interpreted as particular realizations of the novel  $\Sigma_\Q$, where the entries of $\Q$ are defined by selecting
\begin{itemize}
    \item[$\diamond$] $q_{i}=\eta$, $\forall i \in \mathcal{I}$, for $\Sigma_{\eta}$, 
    \item[$\diamond$] $q_{i} = (\frac{\rho}{2})(d_{i}+\frac{\rho}{2})^{-1}$,  $\forall i \in \mathcal{I}$, for $\Sigma_{\rho}$, 
    \item[$\diamond$] $q_{i}=1-\epsilon d_{i}$, $\forall i \in \mathcal{I}$, for $\Sigma_{\epsilon}$.
\end{itemize}
It has been already shown in~\cite{fabris2019distributed,fabris2020proximal,olfati2007consensus} that the adoption of any scheme $\Sigma_{\eta}, \Sigma_{\rho}, \Sigma_{\epsilon}$ ensures the convergence of the system state estimate towards the centralized solution~\eqref{eq:centrSol} when the corresponding regularization parameter  is selected in the domain $\mathcal{Q}_\bullet$, with $\bullet$ standing for $\eta, \rho$ or $\epsilon$ (third column of Table~\ref{tab:schemes}).
In addition, since the estimation convergence speed is linked to the second largest (in modulus) eigenvalue of the related state matrix, the choice of such parameters can be optimized based on the minimization of the Convergence Rate Index (CRI), defined as
\begin{equation}\label{eq:minimizationoftheratecriterion}
r_{\bullet}=\underset{i \in  \mathcal{I}_1}{\max} \vert\lambda_{i}^{\mathbf{F}_{\bullet}} \vert, \quad \rate{\bullet} \in[0,1].
\end{equation}
In particular, for the schemes $\Sigma_{\eta}, \Sigma_{\rho}$, this selection is valid for graphs with $\varsigma_{\normlap}\ge 1$, while otherwise a trivial solution is enforced with $\eta=0$ and $\rho=0$, respectively~\cite{fabris2019distributed,fabris2020proximal}.

Now, when accounting for the GRDS interpretation, the convergence of the existing solutions is guaranteed on an extended regularization parameter domain $\widecheck{\mathcal{Q}}_\bullet \supseteq  \mathcal{Q}_\bullet$, reported in the fourth column 
of Table~\ref{tab:schemes}: $\widecheck{\mathcal{Q}}_\eta$ and $\widecheck{\mathcal{Q}}_\rho$ are computed according to~\eqref{eq:constraintqiU}, while $\overline{{\mathcal{Q}}}_\epsilon \supseteq \widecheck{\mathcal{Q}}_\epsilon$ is derived from the fact that Theorem~\ref{thm:sufficientcondsimplstab} provides only sufficient conditions.
Remarkably, this extension is beneficial for the improvement of the convergence rate of the estimation algorithms. 
In actual facts, by taking into account the introduced extended domains, it is possible to prove that the optimal parameter selection results as in the last column of Table~\ref{tab:schemes} and it is valid for {\emph{any}} undirected and connected graph modeling a given SN. 
The latter fact also constitutes an original contribution of this work with respect to the existing literature, given that it proposes a GRDS scheme $\Sigma_\Q$ for graph topologies with any $\varsigma_{\normlap}$ and extends the applicability of $\Sigma_{\eta}, \Sigma_{\rho}$ for $\varsigma_{\normlap} < 1$.

\subsection{Discussion over graph topologies with $\varsigma_{\normlap} < 1$}
\label{sec:sigma}

The property $\varsigma_{\normlap} < 1$ is not characterizing a unique class of graphs, however it suggests some interesting considerations on a variety of topologies.  
Indeed, 
for graphs whose eigenvalues can be computed in closed form, the above condition can be verified straightforwardly. For example, it is possible to assess that for any circulant graph $C_{n}(1,2)$ wherein $\mathcal{V} = \{ v_1 \ldots v_n\}$ with $n>10$ and  $\mathcal{E} =\{ (v_i, v_{(i\pm1)\text{mod} n}),..., (v_i,v_{(i\pm 2) \text{mod}n}), \forall i \in \mathcal{I}\}$, it holds that $\varsigma_{\normlap} < 1$. 
More in general, $\varsigma_{\normlap} = (\lambda_{1}^{\normlap}+\lambda_{n-1}^{\normlap})/2$ can be upper-bounded by means of the bounds on the normalized Laplacian matrix eigenvalues.
Specifically, 
it holds that
\begin{itemize}
    \item[$\diamond$] $\lambda_{1}^{\normlap} \leq 2C_{\Gmc}$, where $C_{\Gmc}=  \min_{\mathcal{G}_0 } \frac{ |\bar{\mathcal{E}}(\mathcal{G}_0,\mathcal{G}_1)|}{\min\{\mathrm{vol}(\mathcal{G}_0),\mathrm{vol}(\mathcal{G}_1)\} }$ is the \emph{Cheeger constant} of the graph $\Gmc= (\mathcal{V},\mathcal{E})$. $C_{\Gmc}$ is defined accounting for the subgraph $\mathcal{G}_0= (\mathcal{V}_0,\mathcal{E}_0)$ of $\Gmc$ and its complement $\mathcal{G}_1= (\mathcal{V}_1,\mathcal{E}_1), \mathcal{V}_1 = \mathcal{V} \backslash \mathcal{V}_0$,  through the cardinality of their cut set $\bar{\mathcal{E}}(\mathcal{G}_0,\mathcal{G}_1) =\{(v_i,v_j) \in \mathcal{E} \; \vert \; v_i \in \mathcal{V}_0, v_j \in \mathcal{V}_1\} \subseteq \mathcal{E}$ and their volume~\cite{chung1997spectral};
    \item[$\diamond$] $\lambda_{n-1}^{\normlap} \leq 2(1- H_{\Gmc})$, where $H_{\Gmc} = \min_{\mathcal{H}} \frac{|\Nmc_{i} \cap \Nmc_{j}|}{2\max\{d_{i},d_{j}\}}$  with $\mathcal{H} = \{ i, j \in \mathcal{I} \; \vert \; i<j,  (v_i,v_j) \in \Emc\}$~\cite{rojo2013new}.
\end{itemize}
It follows by simple substitution that the property $\varsigma_{\normlap} < 1$ is ensured for all the topologies such that $C_{\Gmc} < H_{\Gmc}$, which may be useful when the eigenvalues are not easy to be computed\footnote{This condition is less strict than that over the eigenvalues: in the case above of the circulant matrix $C_{n}(1,2)$ it yields $C_{\Gmc}<H_{\Gmc}$ if $n>27$.}.
Intuitively, $C_{\Gmc}$ occurs to be small for graphs with high $n$ that can be partitioned into complementary subgraphs $\mathcal{G}_0$ and $\mathcal{G}_1$, approximating complete topologies (hence with large volumes) associated to cut sets $\bar{\mathcal{E}}(\mathcal{G}_0, \mathcal{G}_1)$ having small cardinality. %
Note that these features  generally characterize the \emph{small-world} patterns. 
Conversely, being $H_{\Gmc}\in \left[0,\frac{n-2}{2(n-1)} \right]$, high values of $H_{\Gmc}$ are obtained for high network cardinality $n$, thus decreasing the bound for $\lambda_{n-1}^{\normlap}$\footnote{This is in line with the bound given by $\lambda_{1}^{\normlap} \le \frac{n}{n-1} \le \lambda_{n-1}^{\normlap}$, where the lower bound for $\lambda_{n-1}^{\normlap}$ is minimized, tending to $1$, when $n$ is high.}.

\section{Numerical Results}\label{sec:simulations}
	
Some simulation results are presented in this section with the intent of highlighting the strengths of the proposed system state estimation GRDS approach with respect to the state-of-art solutions described in~\cite{fabris2019distributed,fabris2020proximal,olfati2007consensus}, especially in case of {SNs} modeled by graphs  characterized by $\varsigma_{\normlap}<1$.
	
For this purpose, the attention is focused on four case studies, where the considered {SNs} are associated to the topologies 
in Figure~\ref{fig:sim_convrates}. In correspondence to these scenarios, the same figure reports the CRI~\eqref{eq:minimizationoftheratecriterion}, computed by adopting
\begin{itemize}
    \item[$\diamond$] the scheme $\Sigma_0$, 
    \item[$\diamond$] the schemes $\Sigma_\eta$, $\Sigma_\rho$ and $\Sigma_\epsilon$, optimizing the regularization parameters selection both on $\mathcal{Q}_\bullet$ and on $\widecheck{\mathcal{Q}}_\bullet$, 
        \item [$\diamond$] the scheme $\Sigma_\Q$ with $\Q$ derived either from a random selection of $q_i \in \widecheck{\mathcal{Q}}$,  $i\! \in \! \mathcal{I}$, or from an iterative optimization over several trials: in the latter, the parameters $q_i$, $i\! \in \! \mathcal{I}$, are initialized accounting for the best (in terms of convergence speed) of the previous strategies and updated based on a greedy heuristic after random perturbations.
	\end{itemize}
Two preliminary observations are required: for the scheme $\Sigma_\epsilon$, it holds that $\epsilon^{\ast} = \varsigma_{\L}^{-1}$ independently on the parameter domain,
hence in Figure~\ref{fig:sim_convrates} a single value ${r}_{\epsilon^\ast}$ is reported;  $r_{\Q^\ast}$, instead, refers to the CRI computed in correspondence to the results of the optimization of the parameters $q_i, i \in \mathcal{I}$.
	
\begin{figure*}[t!]
	\centering
	\includegraphics[width = 0.84\textwidth]{./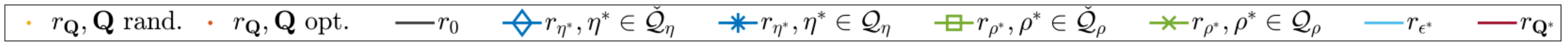}\\		\subfigure[Small-world graph -- $n = 22$, $\varsigma_{\normlap} < 1$]{
	\includegraphics[trim={0cm -5.5cm 0cm 0cm},clip,width=0.33\columnwidth]{./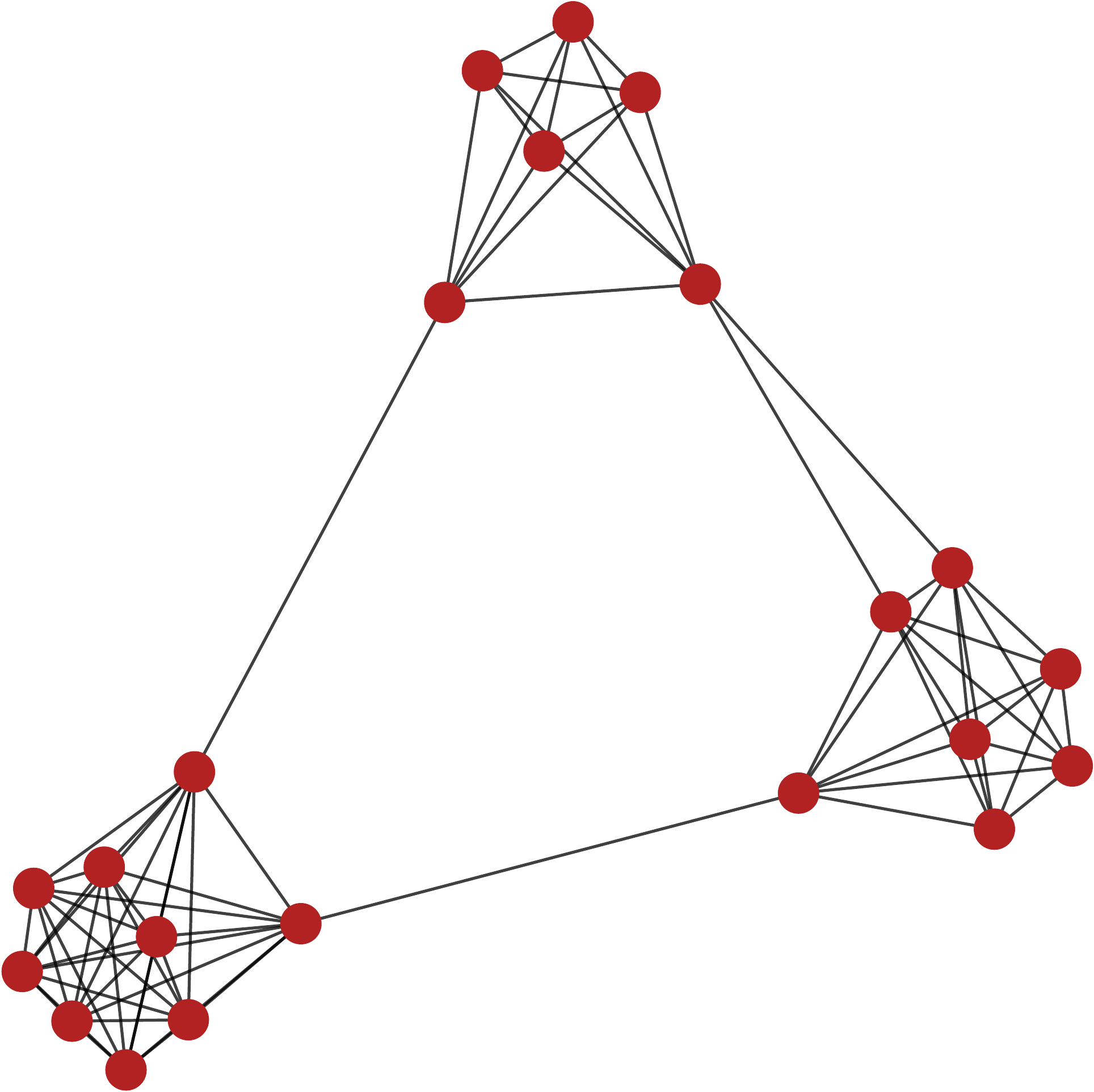} \label{fig:sim_topologiesa} \hspace{0.15cm}
	\includegraphics[trim={0 0cm 0 0cm},clip,width=0.63\columnwidth]{./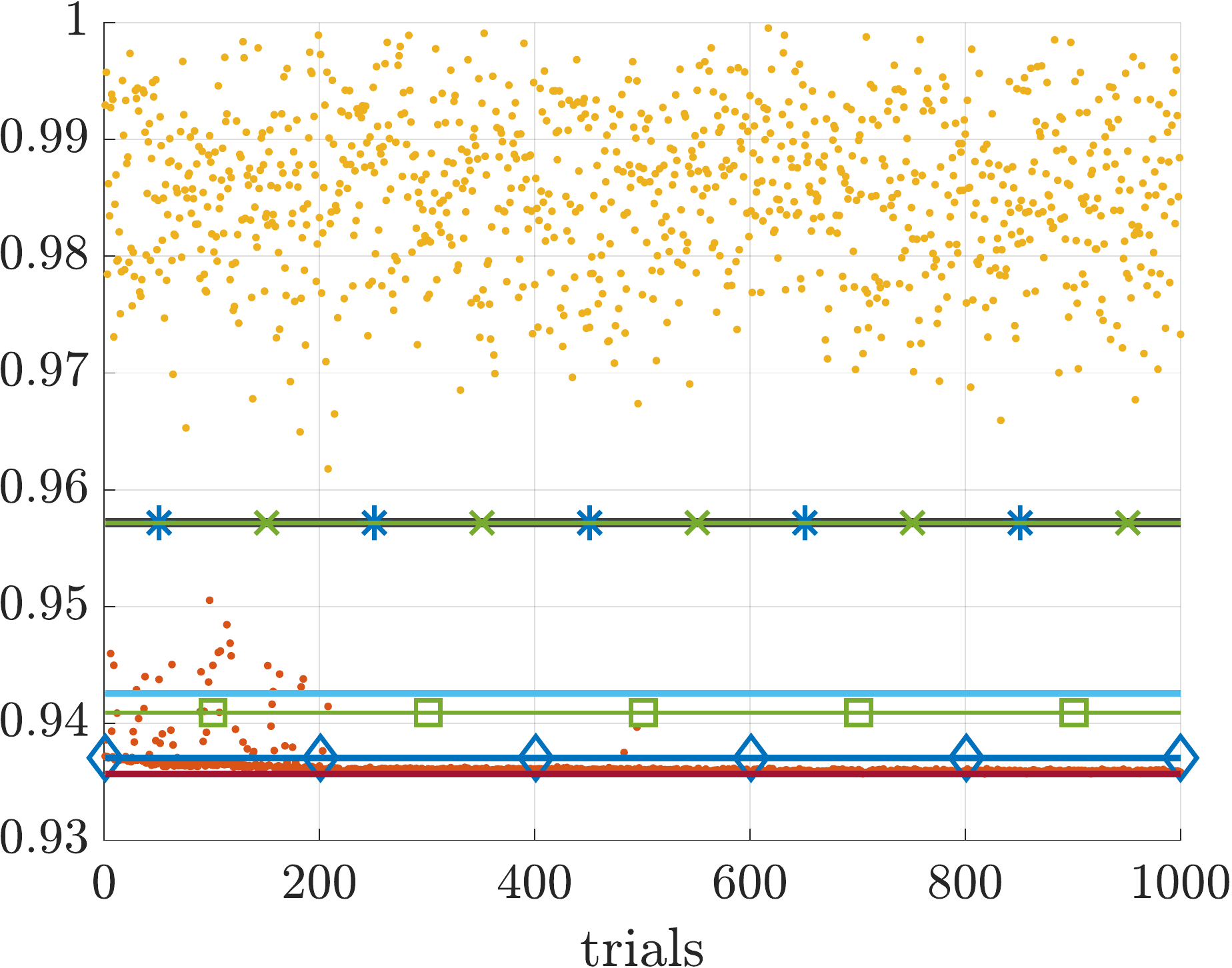}\label{fig:sim_convratesa}} \hfill 
	\subfigure[Circulant graph -- $n = 36$, $\varsigma_{\normlap} < 1$]{
	\includegraphics[trim={0cm -7.5cm 0cm 0cm},clip,width=0.33\columnwidth]{./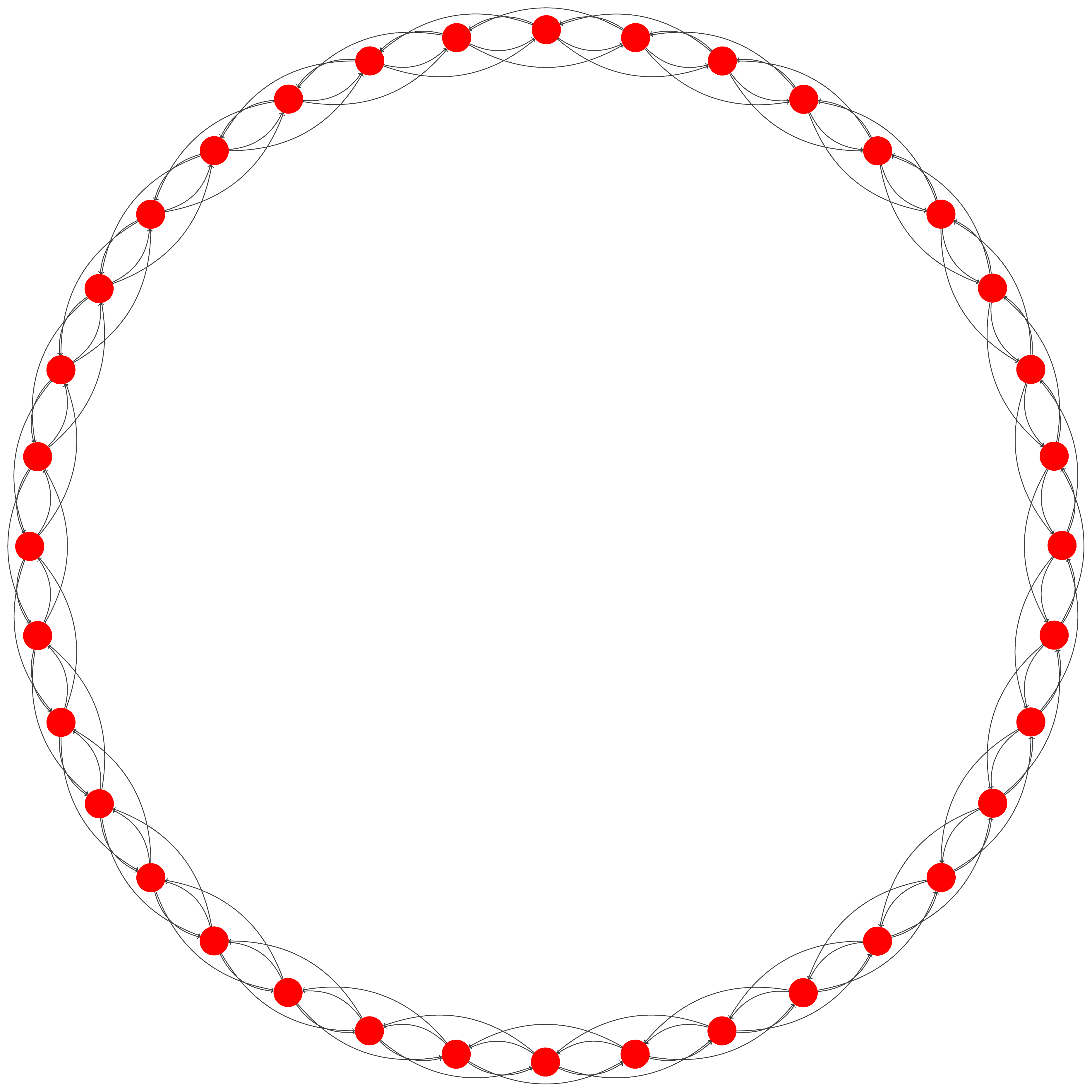}\label{fig:sim_topologiesb} \hspace{0.15cm}
	\includegraphics[trim={0 0cm 0 0cm},clip,width=0.63\columnwidth]{./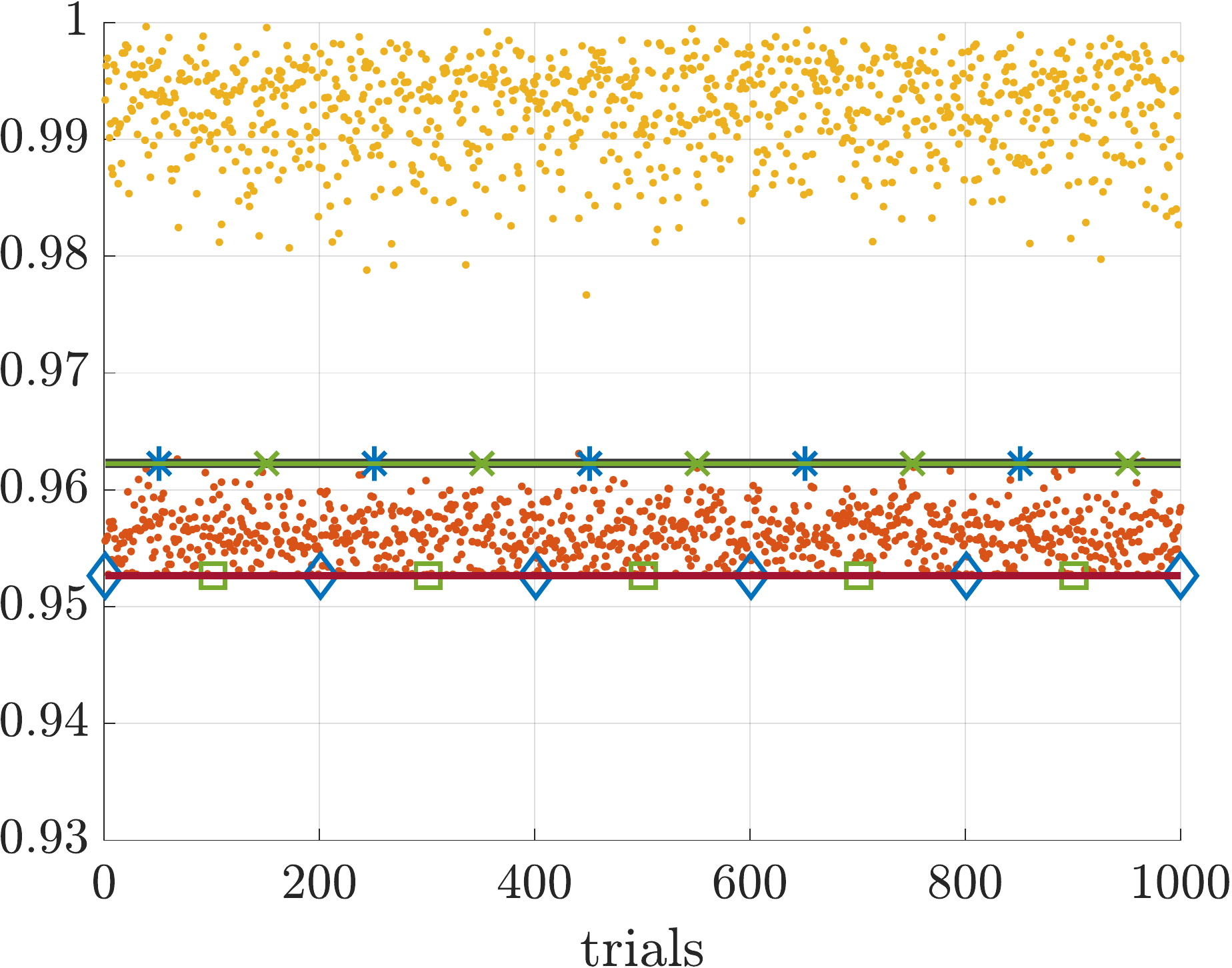}\label{fig:sim_convratesb}}
	\\
	\subfigure[Friendship graph -- $n = 19$, $\varsigma_{\normlap} = 1$]{
	\includegraphics[trim={0cm -5.5cm 0cm 0cm},clip,width=0.33\columnwidth]{./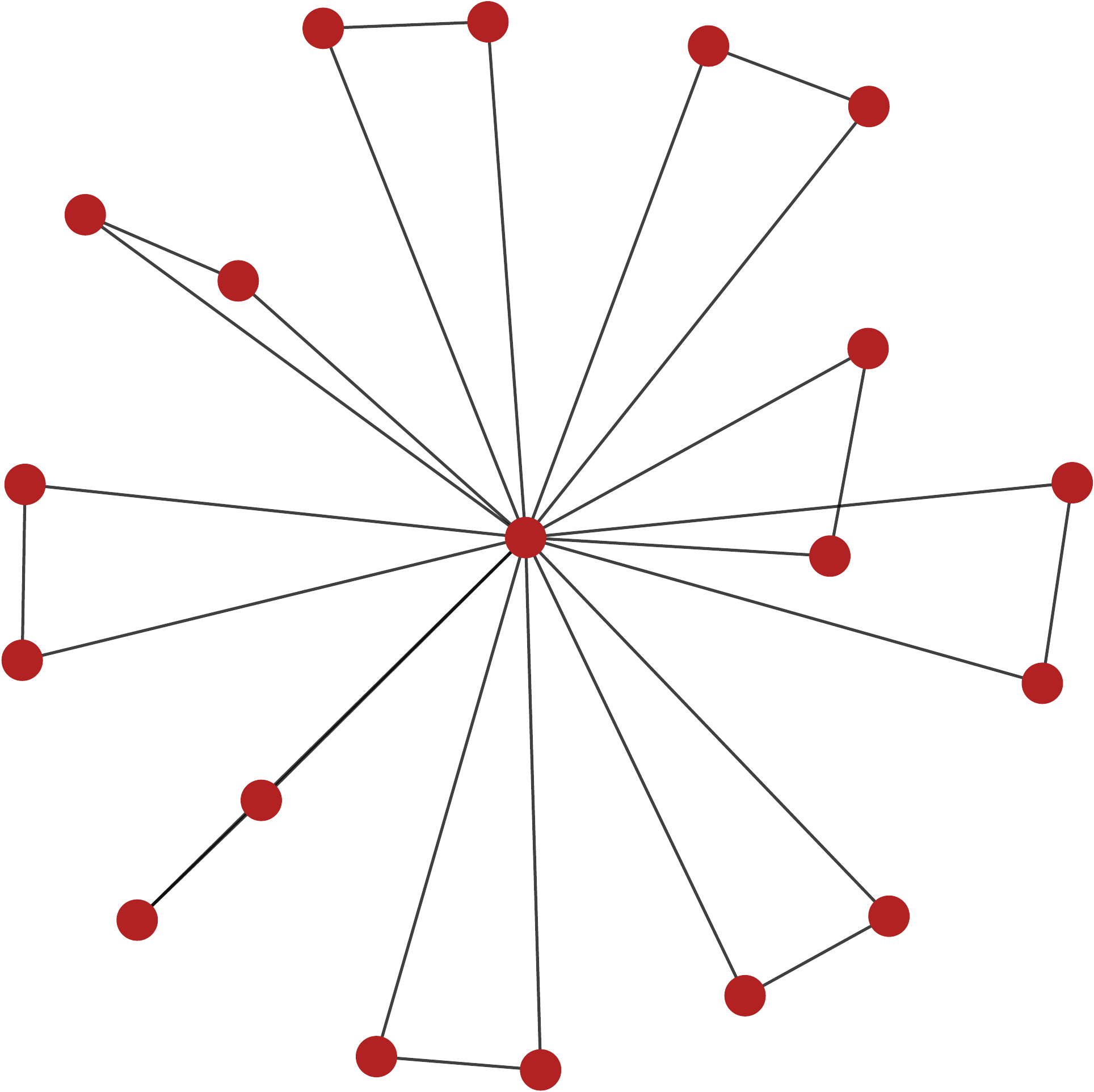}\label{fig:sim_topologiesc} \hspace{0.1cm}
	\includegraphics[trim={0 0cm 0 0cm},clip,width=0.63\columnwidth]{./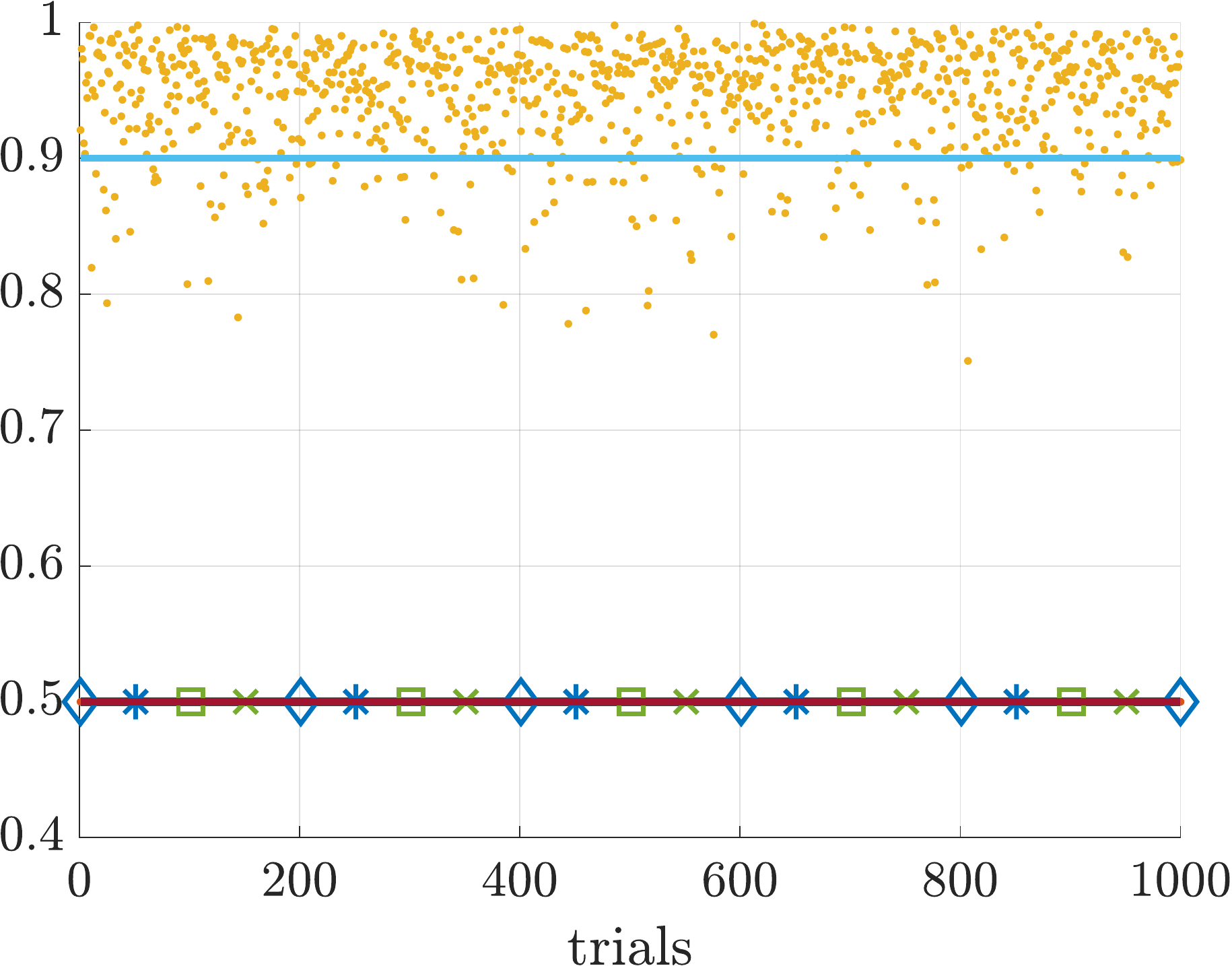}\label{fig:sim_convratesc}} \hfill
	\subfigure[Ramanujan graph -- $n = 16$, $\varsigma_{\normlap} > 1$]{
	\includegraphics[trim={0cm -3.5cm 0cm 0cm},clip,width=0.34\columnwidth]{./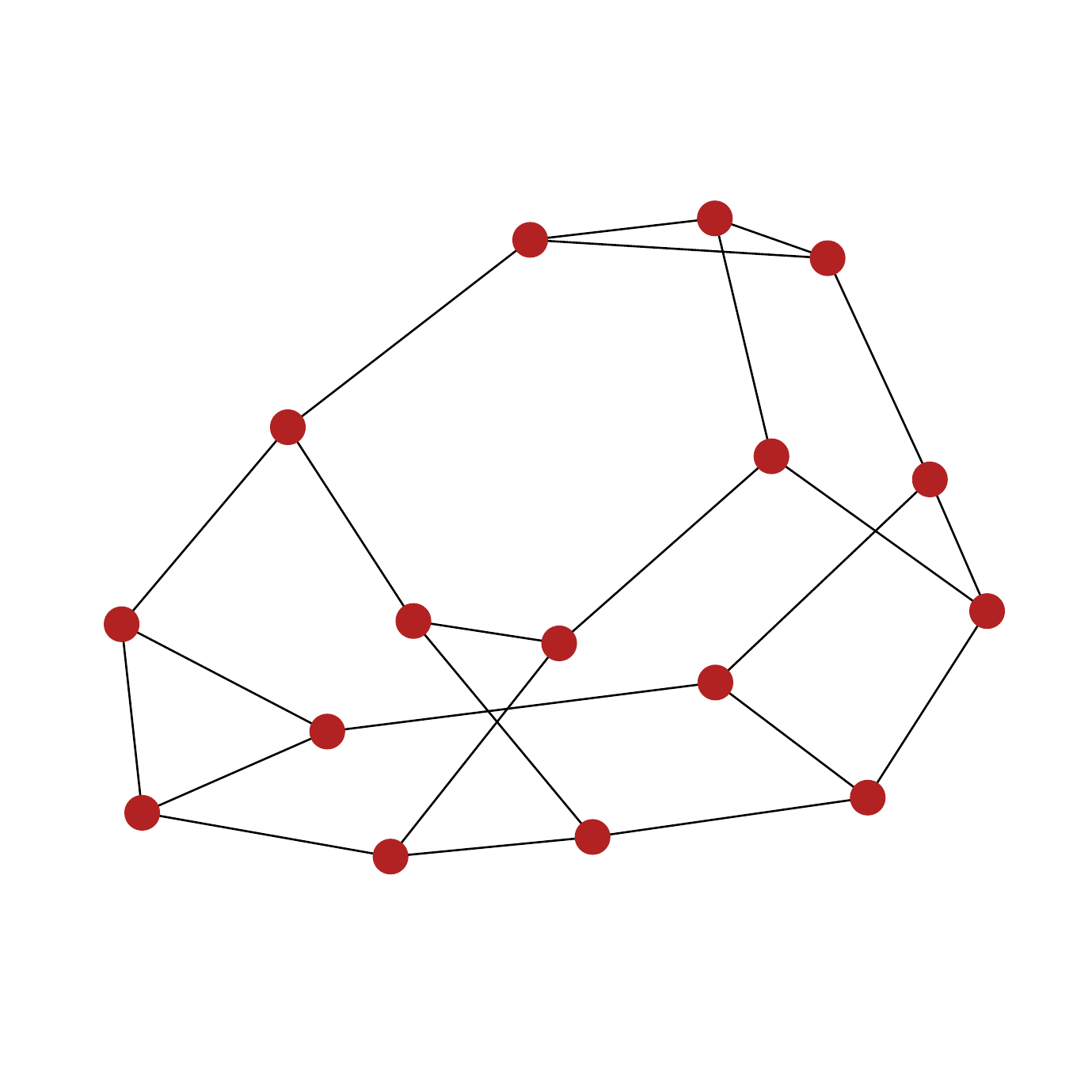}\label{fig:sim_topologiesd} \hspace{0.15cm}
	\includegraphics[trim={0cm 0cm 0cm 0cm},clip,width=0.63\columnwidth]{./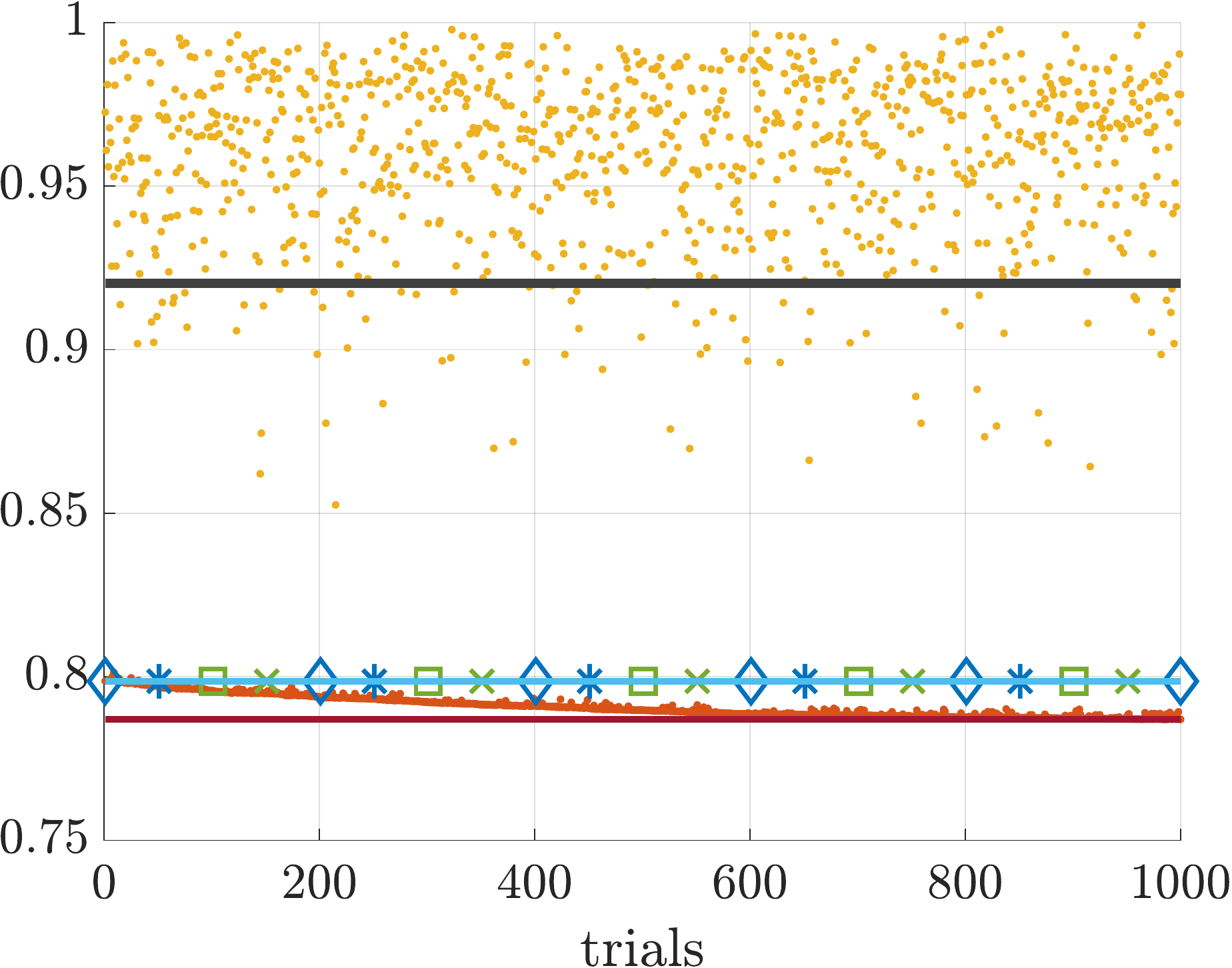}\label{fig:sim_convratesd}}
	\caption{{Case studies - CRI~\eqref{eq:minimizationoftheratecriterion} [right] of the different system state estimation schemes for some topologies [left].}} 
			\label{fig:sim_convrates}
		\vspace{-10pt}
	\end{figure*}
	
Figure~\ref{fig:sim_convratesa} considers a \emph{small-world} network composed by three complete graphs having 6, 7 and 9 nodes interconnected by four edges. In this case, it is $\varsigma_{\normlap} < 1$ and the convergence performance of the schemes $\Sigma_\eta$ and $\Sigma_\rho$ improves when the regularization parameters are selected in $\widecheck{\mathcal{Q}}_\bullet$ ($r_{{\eta}^{\ast}} = 0.937, r_{\rho^{  \ast}} = 0.941$) rather then in $\mathcal{Q}_\bullet$ ($r_{\eta^{\ast}} = r_{\rho^{  \ast}} = r_{0} = 0.957$). In particular, the  corresponding CRIs approximate $r_{\Q^\ast} = 0.936$, while $r_{\epsilon^\ast} = 0.943$.
Also for the circulant graph $C_{36}(1,2)$ in Figure~\ref{fig:sim_convratesb}, it results $\varsigma_{\normlap} < 1$. Analogously to the previous case, it occurs that  $r_{\eta^{\ast}} = r_{\rho^{\ast}} = r_{\epsilon^{\ast}}  = 0.953 = r_{\Q^\ast}$ with $\eta^{\ast} \in \widecheck{\mathcal{Q}}_\eta$ and $\rho^{\ast} \in \widecheck{\mathcal{Q}}_\rho$, while $r_{\eta^{\ast}} = r_{\rho^{\ast}} = r_0 = 0.962$ with $\eta^{\ast} \in \mathcal{Q}_\eta$ and $\rho^{\ast} \in \mathcal{Q}_\rho$; the high topological symmetry implies that the optimal value corresponds to the same $q_i$ for all nodes.
Note also that with $C_{36}(1,2)$ also $C_{\Gmc}<H_{\Gmc}$ holds true. 

These two case studies confirm the advantages of the GRDS approach, through both the interpretation of the solutions of~\cite{fabris2019distributed,fabris2020proximal} in the extended domains and the potentialities entailed by the matrix $\Q$ itself.
	
The friendship graph in Figure~\ref{fig:sim_convratesc}, composed of $n=19$ nodes and constructed by joining $(n-1)/2$ cycle graphs $C_3(1,1)$ with a common vertex, is structurally characterized by $\varsigma_{\normlap} = 1$ (this property holds for any $n$-nodes friendship graph). In this case, 
accounting both for ${\mathcal{Q}}_\bullet$ and $\widecheck{\mathcal{Q}}_\bullet$, it is $r_{\eta^\ast} = r_{\rho^\ast} = 0.5= r_{\Q^\ast}$, namely the schemes $\Sigma_\eta$ and $\Sigma_\rho$ turn out to be optimal realizations of the generalized scheme $\Sigma_\Q$, independently on the parameter domain. Conversely, $\Sigma_\epsilon$ provides a high CRI ($r_{\epsilon^\ast}=0.9 > r_{\Q^\ast}$).
Finally, Figure~\ref{fig:sim_convratesd} shows the behavior of a Ramanujan topology having $\varsigma_{\normlap} > 1$. 
In detail, this is a regular graph consisting of $n=16$ nodes  with common degree $c =3$, and  satisfying $\max_{i\in \Imc_{1}} \{c \; |1-\lambda_{i}^{\normlap}| \} \leq 2\sqrt{c-1}$.
Here, the convergence performance of the schemes 
in~\cite{fabris2019distributed,fabris2020proximal,olfati2007consensus} does not improve when accounting for their GRDS interpretation. 
However, the optimization of the $q_i, i \in \mathcal{I}$, allows to find a regularization matrix $\Q^{*}$ corresponding to a lower CRI with respect to the other  selections ($r_{\Q^{*}} = 0.787 < r_{\eta^{*}} = r_{\rho^{*}} = r_\epsilon = 0.799$).

Notably, the index $r_{\Q^\ast}$ shows a transient behavior in Figure~\ref{fig:sim_convrates}(a) and (d), since the initialization in such cases is far from the optimal value; conversely, in Figure~\ref{fig:sim_convrates}(b)-(c), the  parameter selection is initialized already in correspondence to the optimal CRI and the applied perturbations are thus concentrated respectively above or on that value.

Indeed, the performance of the scheme $\Sigma_\Q$ is depending on the entries of matrix $\Q$, whose optimal selection is not straightforward and can be computed in closed form only for simple networks\footnote{E.g.: the line graph with $n=3$ and $d_1 = 2$, $d_2=d_3=1$ has $\varsigma_{\normlap}>1$; it can be proven that an optimal parameters selection is  $q^{\ast}_{1}=1-2q^{\ast}$, $q^{\ast}_{2}=q^{\ast}_{3}=q^{\ast}$, $q^{\ast} \in (0,1/2)$ arbitrarily small, leading to $r_{\Q}^{\ast}=q^{\ast}$.}. 
Nonetheless, and most importantly, the proposed GRDS realizations always allow to improve the convergence rate for the topologies characterized by $\varsigma_{\normlap} < 1$ and perform equally well in the other case.

\section{Concluding Remarks}\label{sec:conclusions}
	
This paper addresses the SERM problem in a regularized LS minimization framework proposing a GRDS approach that encompasses the state-of-the-art solutions~\cite{fabris2019distributed,fabris2020proximal,olfati2007consensus} via the introduction of the regularization matrix $\Q$.
A (sufficient) condition on the $\Q$ selection is stated ensuring the convergence  of the system state estimation towards the centralized solution.   
Numerical results highlight the strengths of the novel GRDS framework with respect to the convergence performance, especially in the case of {SNs} modeled by graphs having $\varsigma_{\normlap}<1$.
	Possible future research directions involve, for instance, the introduction of some  weights $w_{ij}\geq0$ in the cost function~\eqref{eq:cost_function}: these may represent the reliability on the communication link between the $i$-th and the $j$-th node or the trust/confidence level on the measurement $\tilde{x}_{ij}$. In the former case, it is reasonable to assume $w_{ij} = w_{ji}, \forall e_{ij} \in \Emc$ and the work outcome are still valid with minor modifications, including the use of weighted Laplacian matrix. In the latter case, it will be necessary to account for directed graphs.
	
	\bibliographystyle{IEEEtran}
\bibliography{new_bib}

\begin{thebibliography}{10}
\providecommand{\url}[1]{#1}
\csname url@samestyle\endcsname
\providecommand{\newblock}{\relax}
\providecommand{\bibinfo}[2]{#2}
\providecommand{\BIBentrySTDinterwordspacing}{\spaceskip=0pt\relax}
\providecommand{\BIBentryALTinterwordstretchfactor}{4}
\providecommand{\BIBentryALTinterwordspacing}{\spaceskip=\fontdimen2\font plus
\BIBentryALTinterwordstretchfactor\fontdimen3\font minus
  \fontdimen4\font\relax}
\providecommand{\BIBforeignlanguage}[2]{{%
\expandafter\ifx\csname l@#1\endcsname\relax
\typeout{** WARNING: IEEEtran.bst: No hyphenation pattern has been}%
\typeout{** loaded for the language `#1'. Using the pattern for}%
\typeout{** the default language instead.}%
\else
\language=\csname l@#1\endcsname
\fi
#2}}
\providecommand{\BIBdecl}{\relax}
\BIBdecl

\bibitem{guastella2020cooperative}
D.~A. Guastella, V.~Campss, and M.-P. Gleizes, ``A cooperative multi-agent
  system for crowd sensing based estimation in smart cities,'' \emph{IEEE
  Access}, vol.~8, pp. 183\,051--183\,070, 2020.

\bibitem{lissandrini2019cooperative}
N.~Lissandrini, G.~Michieletto, R.~Antonello, M.~Galvan, A.~Franco, and
  A.~Cenedese, ``Cooperative optimization of uavs formation visual tracking,''
  \emph{Robotics}, vol.~8, no.~3, p.~52, 2019.

\bibitem{barooah2007estimation}
P.~Barooah and J.~P. Hespanha, ``Estimation on graphs from relative
  measurements,'' \emph{IEEE Control Sys. Mag.}, vol.~27, no.~4, pp. 57--74,
  2007.

\bibitem{xiong2017cooperative}
Y.~Xiong, N.~Wu, Y.~Shen, and M.~Z. Win, ``{\textcolor{black}{Cooperative
  network synchronization: Asymptotic analysis}},'' \emph{IEEE Trans. Signal
  Process.}, vol.~66, no.~3, pp. 757--772, 2017.

\bibitem{simonjan2019decentralized}
J.~Simonjan and B.~Rinner, ``{\textcolor{black}{Decentralized and
  resource-efficient self-calibration of visual sensor networks}},'' \emph{Ad
  Hoc Networks}, vol.~88, pp. 112--128, 2019.

\bibitem{dehghanpour2018survey}
K.~Dehghanpour, Z.~Wang, J.~Wang, Y.~Yuan, and F.~Bu, ``{\textcolor{black}{A
  survey on state estimation techniques and challenges in smart distribution
  systems}},'' \emph{IEEE Trans. Smart Grid}, vol.~10, no.~2, pp. 2312--2322,
  2018.

\bibitem{trimpe2014event}
S.~Trimpe and R.~D'Andrea, ``{\textcolor{black}{Event-based state estimation
  with variance-based triggering}},'' \emph{IEEE Trans. Autom. Control},
  vol.~59, no.~12, pp. 3266--3281, 2014.

\bibitem{farina2016distributed}
M.~Farina and R.~Carli, ``{\textcolor{black}{Distributed state estimation for
  independent linear systems with relative and absolute measurements}},'' in
  \emph{American Control Conf.}\hskip 1em plus 0.5em minus 0.4em\relax IEEE,
  2016, pp. 2029--2034.

\bibitem{aster2018parameter}
R.~C. Aster, B.~Borchers, and C.~H. Thurber, \emph{Parameter estimation and
  inverse problems}.\hskip 1em plus 0.5em minus 0.4em\relax Elsevier, 2018.

\bibitem{ravazzi2018distributed}
C.~Ravazzi, N.~P. Chan, and P.~Frasca, ``Distributed estimation from relative
  measurements of heterogeneous and uncertain quality,'' \emph{IEEE Trans.
  Signal Inf. Process. Netw}, vol.~5, no.~2, pp. 203--217, 2018.

\bibitem{shi2020bias}
M.~Shi, C.~De~Persis, P.~Tesi, and N.~Monshizadeh, ``{\textcolor{black}{Bias
  estimation in sensor networks}},'' \emph{IEEE Control Netw. Syst}, vol.~7,
  no.~3, pp. 1534--1546, 2020.

\bibitem{fabris2019distributed}
M.~Fabris, G.~Michieletto, and A.~Cenedese, ``On the distributed estimation
  from relative measurements: a graph-based convergence analysis,'' in
  \emph{European Control Conf.}\hskip 1em plus 0.5em minus 0.4em\relax IEEE,
  2019, pp. 1550--1555.

\bibitem{fabris2020proximal}
------, ``A proximal point approach for distributed system state estimation,''
  \emph{IFAC-PapersOnLine}, vol.~53, no.~2, pp. 2702--2707, 2020.

\bibitem{olfati2007consensus}
R.~Olfati-Saber, J.~A. Fax, and R.~M. Murray, ``Consensus and cooperation in
  networked multi-agent systems,'' \emph{Proc. IEEE}, vol.~95, no.~1, pp.
  215--233, 2007.

\bibitem{chung1997spectral}
F.~R. Chung and F.~C. Graham, \emph{Spectral graph theory}.\hskip 1em plus
  0.5em minus 0.4em\relax American Mathematical Soc., 1997, no.~92.

\bibitem{rojo2013new}
O.~Rojo and R.~L. Soto, ``A new upper bound on the largest normalized laplacian
  eigenvalue,'' \emph{Oper. Matrices}, vol.~7, pp. 323--332, 2013.

\end{thebibliography}

\end{document}